\theoremstyle{definition}
\newtheorem{theorem}{Theorem}
\newtheorem{definition}{Definition}
\newtheorem*{definition*}{Definition}
\newtheorem{lemma}{Lemma}
\def\ket#1{| #1 \rangle}
\def\app#1#2{%
  \mathrel{%
    \setbox0=\hbox{$#1\sim$}%
    \setbox2=\hbox{%
      \rlap{\hbox{$#1\propto$}}%
      \lower1.1\ht0\box0%
    }%
    \raise0.25\ht2\box2%
  }%
}
\begin{document}

\title{Coarse-Graining Holographic States:\ A Semiclassical Flow in General Spacetimes}

\author{Chitraang Murdia}
\affiliation{Berkeley Center for Theoretical Physics, Department of Physics, University of California, Berkeley, CA 94720, USA}
\affiliation{Theoretical Physics Group, Lawrence Berkeley National Laboratory, Berkeley, CA 94720, USA}
\author{Yasunori Nomura}
\affiliation{Berkeley Center for Theoretical Physics, Department of Physics, University of California, Berkeley, CA 94720, USA}
\affiliation{Theoretical Physics Group, Lawrence Berkeley National Laboratory, Berkeley, CA 94720, USA}
\affiliation{Kavli Institute for the Physics and Mathematics of the Universe (WPI), The University of Tokyo Institutes for Advanced Study, Kashiwa 277-8583, Japan}
\author{Pratik Rath}
\affiliation{Berkeley Center for Theoretical Physics, Department of Physics, University of California, Berkeley, CA 94720, USA}
\affiliation{Theoretical Physics Group, Lawrence Berkeley National Laboratory, Berkeley, CA 94720, USA}

%-------------------------------------------------------------------------

\begin{abstract}
Motivated by the understanding of holography as realized in tensor networks, we develop a bulk procedure that can be interpreted as generating a sequence of coarse-grained holographic states.
The coarse-graining procedure involves identifying degrees of freedom entangled at short distances and disentangling them.
This is manifested in the bulk by a flow equation that generates a codimension-1 object, which we refer to as the holographic slice.
We generalize the earlier classical construction to include bulk quantum corrections, which naturally involves the generalized entropy as a measure of the number of relevant boundary degrees of freedom.
The semiclassical coarse-graining results in a flow that approaches quantum extremal surfaces such as entanglement islands that have appeared in discussions of the black hole information paradox.
We also discuss the relation of the present picture to the view that the holographic dictionary works as quantum error correction.
\end{abstract}

%-------------------------------------------------------------------------

\maketitle

\makeatletter
\def\l@subsection#1#2{}
\def\l@subsubsection#1#2{}
\makeatother
\tableofcontents

%-----------------------------------------------------------------------

\section{Introduction}
\label{sec:intro}

The holographic principle, as embodied by the AdS/CFT correspondence, has led to a tremendous amount of progress in our understanding of quantum gravity.
In particular, the realization that entanglement plays a crucial role in generating bulk spacetime has put the holographic correspondence on much stronger footing~\cite{Ryu:2006bv,Ryu:2006ef,Hubeny:2007xt,VanRaamsdonk:2010pw,Swingle:2009bg}.
This has led to key insights about bulk reconstruction and subregion duality, culminating in entanglement wedge reconstruction~\cite{Czech:2012bh,Wall:2012uf,Headrick:2014cta,Jafferis:2015del,Dong:2016eik,Cotler:2017erl}.
Interestingly, several of these insights are quite general and do not seem to require an AdS setting in particular, and thus they can be used to understand features of holography in general spacetimes~\cite{Nomura:2016ikr,Nomura:2017npr,Nomura:2017fyh,Nomura:2018kji}.%
\footnote{An early work in this direction is the so-called surface/state correspondence~\cite{Miyaji:2015yva,Miyaji:2015fia}, of which the construction of Refs.~\cite{Nomura:2016ikr,Nomura:2017npr,Nomura:2017fyh,Nomura:2018kji} can be viewed as a covariant generalization. For other work on holography beyond AdS/CFT, see, e.g., Refs.~\cite{Alishahiha:2004md,Dong:2011uf,vanLeuven:2018pwv,Dong:2018cuv,Cooper:2018cmb,Gorbenko:2018oov,Geng:2020kxh}.}

A particular manifestation of the above ideas can be seen in tensor networks (TNs) that serve as useful toy models of holography~\cite{Swingle:2009bg,Pastawski:2015qua,Hayden:2016cfa,Donnelly:2016qqt,Qi:2017ohu}.
TNs prepare quantum states with a lot of structure and via the process of ``pushing'' the state generate a sequence of boundary states, each of which satisfies the Ryu-Takayanagi (RT) formula~\cite{Ryu:2006bv,Ryu:2006ef}.%
\footnote{We distinguish this from the Hubeny-Rangamani-Takayanagi (HRT) formula~\cite{Hubeny:2007xt} which applies in time-dependent spacetimes.}
This procedure involves disentangling certain short-distance degrees of freedom and coarse-grains the state by reducing it to one in a smaller effective Hilbert space.
Applying this procedure to a general smooth classical spacetime leads to a flow equation in the continuum limit as we shall review later \cite{Nomura:2018kji}.
The flow equation takes the form%
\footnote{The sign convention for the flow parameter $\lambda$ in this paper is opposite to that in Ref.~\cite{Nomura:2018kji}.}
\begin{equation}
    \frac{dx^\mu}{d\lambda} =\frac{1}{2}(\theta_k l^\mu+\theta_l k^\mu),
\nonumber
\end{equation}
where $x^\mu$ are the embedding coordinates of a codimension-2 surface $\sigma$ on which the holographic states are defined, and $\{ k^\mu, l^\mu \}$ are the future-directed null vectors orthogonal to $\sigma$, with $\theta_{k,l}$ being the classical expansions in the corresponding directions. This flow satisfies all the required properties for it to be interpreted as a disentangling procedure resulting in a sequence of coarse-grained states.

In this work, we go beyond the classical flow equation by including bulk quantum corrections.
In the TN picture, we include these effects by modifying the network such that it includes non-universal tensors/bonds as well as bonds connecting tensors nonlocally.
With this picture in mind, we develop a coarse-graining procedure analogous to the classical flow equation which pleasantly fits in with our understanding of holography.
In the continuum limit, the procedure leads to a flow equation similar to that in the classical case:
\begin{align}
    \frac{dx^\mu}{d\lambda} &=\frac{1}{2}(\Theta_k l^\mu+\Theta_l k^\mu),
\nonumber
\end{align}
where $\Theta_{k,l}$ now represent quantum expansions~\cite{Bousso:2015mna}.%
\footnote{We use a modified version of the quantum expansion which includes a bulk entropy contribution from an exterior region as described in Sec.~\ref{sec:main}.}
This is our primary result.
It is in line with many results in which quantum corrections are included by replacing the area $\mathcal{A}/4G_N$ with the generalized entropy $S_{\text{gen}}$~\cite{Bousso:2015mna,Bousso:2015eda,Wall:2018ydq,Bousso:2019dxk,Bousso:2019var}.
Though motivated by TNs, which often face issues in describing time-dependent situations, our procedure can be applied quite generally.
In fact, we obtain consistent descriptions in general time-dependent spacetimes.

Another important progress in understanding holography is the view that the holographic dictionary works as quantum error correction~\cite{Almheiri:2014lwa,Harlow:2016vwg,Akers:2018fow,Dong:2018seb}, where a small Hilbert space of semiclassical bulk states is mapped isometrically into a larger boundary Hilbert space.
In our framework, this picture arises after considering a collection of states over which we want to build a low energy bulk description.
Choosing such a collection is equivalent to erecting a code subspace.
We argue that while there is no invariant choice of code subspace in a general time-dependent spacetime, our framework gives a natural choice(s) determined by the coarse-graining procedure.
This procedure leads to a one-parameter family of ``dualities'' depending on the amount of coarse-graining performed, providing an improved understanding of the holographic dictionary in general spacetimes.

%-------------------------------------------------------------------------

\subsection*{Overview}

In Sec.~\ref{sec:framework}, we first establish the framework in which we are working.
We explain how quantum corrections affect the description of holography in general spacetimes and the associated HRT formula.
In Sec.~\ref{sec:review}, we review the classical flow equation.
In Sec.~\ref{sec:motive}, we motivate our coarse-graining procedure with a toy model of TNs, elucidating how features of a state relevant for the quantum-level consideration are represented there.

In Sec.~\ref{sec:main}, we present our main result, i.e.\ the procedure of performing the flow in the bulk at the quantum level, which corresponds to moving the holographic boundary.
We also discuss properties of this flow indicating that it corresponds to a coarse-graining of holographic states.
We elucidate that the way the flow ends can be used as an indicator of qualitative features of the boundary state describing a given spacetime, using the example of a collapse-formed evaporating black hole.
In Sec.~\ref{sec:QEC}, we discuss how the picture of quantum error correction may be implemented in our framework.
Finally, conclusions are given in Sec.~\ref{sec:concl}.

%-----------------------------------------------------------------------

\section{Framework}
\label{sec:framework}

In this work, we follow and further develop the framework of holography for general spacetimes proposed in Ref.~\cite{Nomura:2016ikr}.
In this framework, we consider an arbitrary spacetime $\mathcal{M}$ and posit the existence of a dual ``boundary'' theory that lives on a holographic screen~\cite{Bousso:2002ju}, which is a codimension-1 hypersurface $H$ embedded in $\mathcal{M}$. This hypersurface is foliated by marginally trapped/anti-trapped codimension-2 surfaces called leaves, which we denote by $\sigma$.
A marginally trapped/anti-trapped surface $\sigma$ is defined by the property that $\sigma$ has classical expansion $\theta=0$ in one of the orthogonal null directions.
The proposal is that the boundary theory describes everything in the ``interior'' of $H$, and states of the theory are naturally defined on the leaves $\sigma$, which provide a preferred foliation of $H$ into constant time surfaces.
Based on the covariant entropy bound, it is expected that the boundary theory effectively possesses $\mathcal{A}(\sigma)/4G_N$ degrees of freedom, where $\mathcal{A}(\sigma)$ is the area of a leaf.
The AdS/CFT correspondence can be viewed as a special case of this duality, where the holographic screen is sent to the conformal boundary of AdS.

Given this setup, it was shown in Ref.~\cite{Sanches:2016sxy} that the HRT formula for computing entanglement entropy can be applied consistently using a maximin procedure~\cite{Wall:2012uf}; i.e., for any subregion $A$ of a leaf
\begin{align}\label{eq:HRT}
    S(A)&= \frac{\mathcal{A}(\gamma_A)}{4G_N},
\end{align}
where $S(A)$ is the von~Neumann entropy of the reduced density matrix on subregion $A$, and $\mathcal{A}(\gamma_A)$ is the area of the HRT surface $\gamma_A$ of $A$.
The entanglement wedge, denoted by $\text{EW}(A)$, is defined as the bulk domain of dependence of any bulk partial Cauchy slice $\Sigma_A$ with $\partial\Sigma_A = A \cup \gamma_A$, which is often called the homology surface.
The entropies obtained by the above procedure can be shown to satisfy all the basic properties of von~Neumann entropy and are consistent with more constraining inequalities satisfied by holographic states in AdS/CFT~\cite{Headrick:2007km,Hayden:2011ag,Bao:2015bfa}.

Now, in order to generalize this framework to the quantum level, we can follow the simple guiding principle of replacing $\mathcal{A}/4G_N$ with the generalized entropy $S_{\text{gen}}$ to include quantum corrections in the bulk~\cite{Faulkner:2013ana,Engelhardt:2014gca}
\begin{align}\label{eq:Sgen}
    \frac{\mathcal{A}}{4G_N} \,\rightarrow\, S_{\text{gen}} = \frac{\mathcal{A}}{4G_N} + S_{\text{bulk}},
\end{align}
where $S_{\text{bulk}}$ is the von~Neumann entropy of the bulk reduced density matrix on the homology surface which is appropriately modified at the quantum level.
This is motivated by various examples in which this naturally works~\cite{Bousso:2015mna,Bousso:2015eda,Wall:2018ydq,Bousso:2019dxk,Bousso:2019var}.
Furthermore, $S_{\text{gen}}$ is a natural quantity because it is a quantity that is renormalization scheme independent, and hence is expected to be associated with fundamental degrees of freedom in the UV theory~\cite{Susskind:1994sm,Solodukhin:2011gn,Wall:2018ydq}.

The generalization to include bulk quantum corrections requires a refined understanding of the holographic duality which we now turn to.
First, we note that the global description of a state involves both the interior and exterior portions of the holographic screen~\cite{Bousso:2002ju}.
Although the generalization of the HRT formula we describe applies to the interior region, it will be important to keep track of the exterior as well.

\begin{figure}[t]
\centering
  \includegraphics[clip,width=0.55\columnwidth]{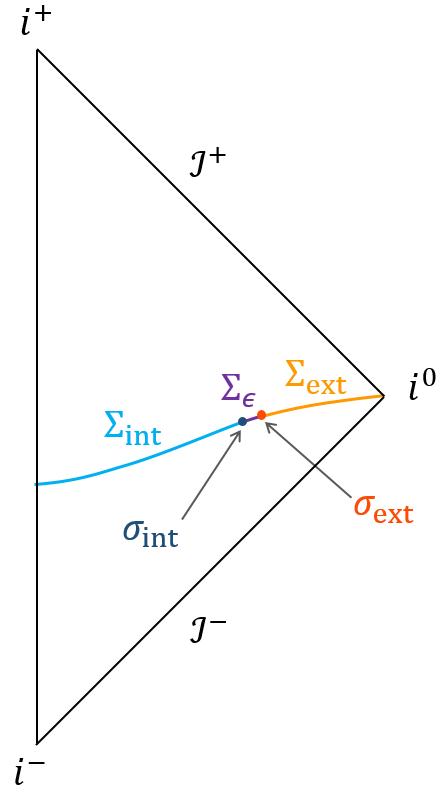}
\caption{The leaf $\sigma$ is split into $\sigma_{\text{int}} \cup \sigma_{\text{ext}}$ such that $\sigma_{\text{int}}$ and $\sigma_{\text{ext}}$ are separated by a small regulating region $\Sigma_{\epsilon}$ on a Cauchy slice $\Sigma$.
This induces a division of the Cauchy slice as $\Sigma = \Sigma_{\text{int}} \cup \Sigma_{\epsilon} \cup \Sigma_{\text{ext}}$.
We define the location of the holographic screen by requiring that it is marginally quantum trapped/anti-trapped under variations of $\sigma_{\text{int}}$.}
\label{fig:regulate}
\end{figure}

Next, the quantum extension of Eq.~(\ref{eq:Sgen}) implies that the location of the screen on which the holographic theory is defined needs to be shifted accordingly.
Let us consider a specific global bulk state.
We propose that the boundary theory describing the dynamics of this state lives on a modified version of a Q-screen $H'$~\cite{Bousso:2015eda}, rather than a classical holographic screen $H$.
A Q-screen is defined as a codimension-1 hypersurface foliated by quantum marginally trapped/antitrapped surfaces, i.e.\ surfaces that have the quantum expansion $\Theta=0$ in one of the orthogonal null directions.
Usually the quantum expansion is defined by including a contribution from the von Neumann entropy of the interior or exterior region of a leaf which is a simple codimension-2 surface.
In this paper, however, we consider that leaf $\sigma$ is given by $\sigma_{\text{int}} \cup \sigma_{\text{ext}}$ such that $\sigma_{\text{int}}$ and $\sigma_{\text{ext}}$ are split by a small regulating region $\Sigma_{\epsilon}$ on a Cauchy slice $\Sigma$ as seen in Fig.~\ref{fig:regulate}.
This induces a division of the Cauchy slice as $\Sigma = \Sigma_{\text{int}} \cup \Sigma_{\epsilon} \cup \Sigma_{\text{ext}}$.
Now, we define the generalized entropy of $\sigma$ to be 
\begin{align}
    S_{\text{gen}}(\sigma)&=\frac{\mathcal{A(\sigma)}}{4G_N}+ S_{\text{bulk}}(\Sigma_{\text{int}} \cup \Sigma_{\text{ext}}).
\label{eq:def-S_gen}
\end{align}
With this definition of generalized entropy, one can define a quantum expansion $\Theta$ by the variation of $S_{\text{gen}}(\sigma)$ under deformations of $\sigma_{\text{int}}$ while holding $\sigma_{\text{ext}}$ fixed.
Using this definition of $\Theta$, we can locate marginally trapped/anti-trapped surfaces self-consistently for any given $\epsilon$.
The location $H'$ of the holographic screen is a Q-screen defined using this definition of $\Theta$ in the limit $\epsilon \to 0$.

Generalizing the HRT formula of Eq.~(\ref{eq:HRT}), we postulate that the von~Neumann entropy of a subregion $A$ on the leaf  $\sigma$ of a Q-screen can be computed as
\begin{align}\label{eq:QES}
    S(A)&= \frac{\mathcal{A}(A \cup \Gamma_A)}{4G_N} + S_{\text{bulk}}(\Sigma_A),
\end{align}
where $\Gamma_A$ is the minimal quantum extremal surface (QES)~\cite{Engelhardt:2014gca}, and $\Sigma_A$ is the homology surface with $\partial\Sigma_A = A \cup \Gamma_A$.
To find $\Gamma_A$, we can use a maximin procedure at the quantum level~\cite{Akers:2019lzs} applied to general spacelike surfaces containing $\sigma$.
We treat $\sigma_{\text{ext}}$ as a single unit that cannot be further divided into subregions, which must be either included in or excluded from $A$.
We assume that the leaf $\sigma$ is convex where $\sigma_{\text{ext}}$ is treated as an indivisible unit.
Thus, $S(A)$ obtained by Eq.~(\ref{eq:QES}) satisfies properties required for it to be interpreted as the von~Neumann entropy of the density matrix of subregion $A$.
With this assumption, we will show that the same applies to any renormalized leaf $\sigma(\lambda)$ obtained from $\sigma$ by our coarse-graining procedure.

In AdS/CFT, the regime of validity of the quantum extremal surface formula, Eq.~\eqref{eq:QES}, has been suggested to be all orders in $G_N$~\cite{Engelhardt:2014gca}.
However, there are subtleties with the definition of entanglement entropy for gravitons which have not been completely resolved (see e.g.\ Ref.~\cite{Bousso:2015mna}).
At the least, we expect the formula to hold at $O(1)$, where it can already lead to a surface different from that obtained by using Eq.~\eqref{eq:HRT}~\cite{Penington:2019npb,Almheiri:2019psf,Almheiri:2019hni,Penington:2019kki,Almheiri:2019qdq}.
In order to avoid subtleties with gravitons, one could consider a setup with bulk matter having a large central charge $c$ so that the graviton contribution is subleading in $1/c$ expansion.
We expect a similar regime of validity for Eq.~\eqref{eq:QES} in general spacetimes.

Note that in Eq.~(\ref{eq:QES}) we have included the area contribution from $A$ in addition to that from $\Gamma_A$.
This is required if there is a spacetime region outside the leaf, as is the case in generic spacetimes.
In this case, $S_{\text{bulk}}(\Sigma_A)$ receives a contribution from entanglement of bulk fields across $A$, which is divergent.
This divergence is then canceled with that in the area contribution from $A$ in the first term, making $S(A)$ well defined.
The same applies to the AdS/CFT case if we impose transparent boundary conditions near the boundary which lead to kinetic terms coupling the interior and exterior of AdS space~\cite{Penington:2019npb,Almheiri:2019psf}.
In fact, the classical formula in Eq.~(\ref{eq:HRT}) must also have the contribution from the boundary subregion area, $\mathcal{A}(\gamma_A) \rightarrow \mathcal{A}(A \cup \gamma_A)$, in these cases, although this does not affect the minimization leading to $\gamma_A$ and hence the result of Ref.~\cite{Nomura:2018kji}.

There are special cases in which the area contribution from $A$---as well as the corresponding contribution from $S_{\text{bulk}}(\Sigma_A)$---is absent.
This occurs when the spacetime outside the leaf is ``absent,'' as is the case if Dirichlet boundary conditions are imposed on the Q-screen, or if reflective boundary conditions are imposed in AdS/CFT.
Even in this case, however, our coarse-graining procedure---which corresponds to moving the leaf portion $\sigma_{\text{int}}$ inward---induces the area contribution from $A$ on a moved (i.e.\ renormalized) leaf $\sigma_{\text{int}}(\lambda)$, reflecting the fact that the spacetime continues across $\sigma_{\text{int}}(\lambda)$.%
\footnote{This is different from what has been done in the AdS/CFT literature in the context of $TT$ deformations~\cite{McGough:2016lol,Donnelly:2018bef,Caputa:2019pam,Banerjee:2019ewu,Murdia:2019fax,Lewkowycz:2019xse},
which corresponds to (re)imposing Dirichlet boundary conditions at each step in the coarse-graining, i.e.\ at $\sigma_{\text{int}}(\lambda)$ for all $\lambda$.}

%-----------------------------------------------------------------------

\section{Review of Classical Flow}
\label{sec:review}

In previous work~\cite{Nomura:2018kji}, it was shown that a coarse-graining procedure motivated by TNs can be defined in the bulk at the level of classical geometry.
Here we review this construction, which allows us to elucidate a generalization to include bulk quantum corrections.

\begin{figure}[t]
  \includegraphics[clip,width=0.8\columnwidth]{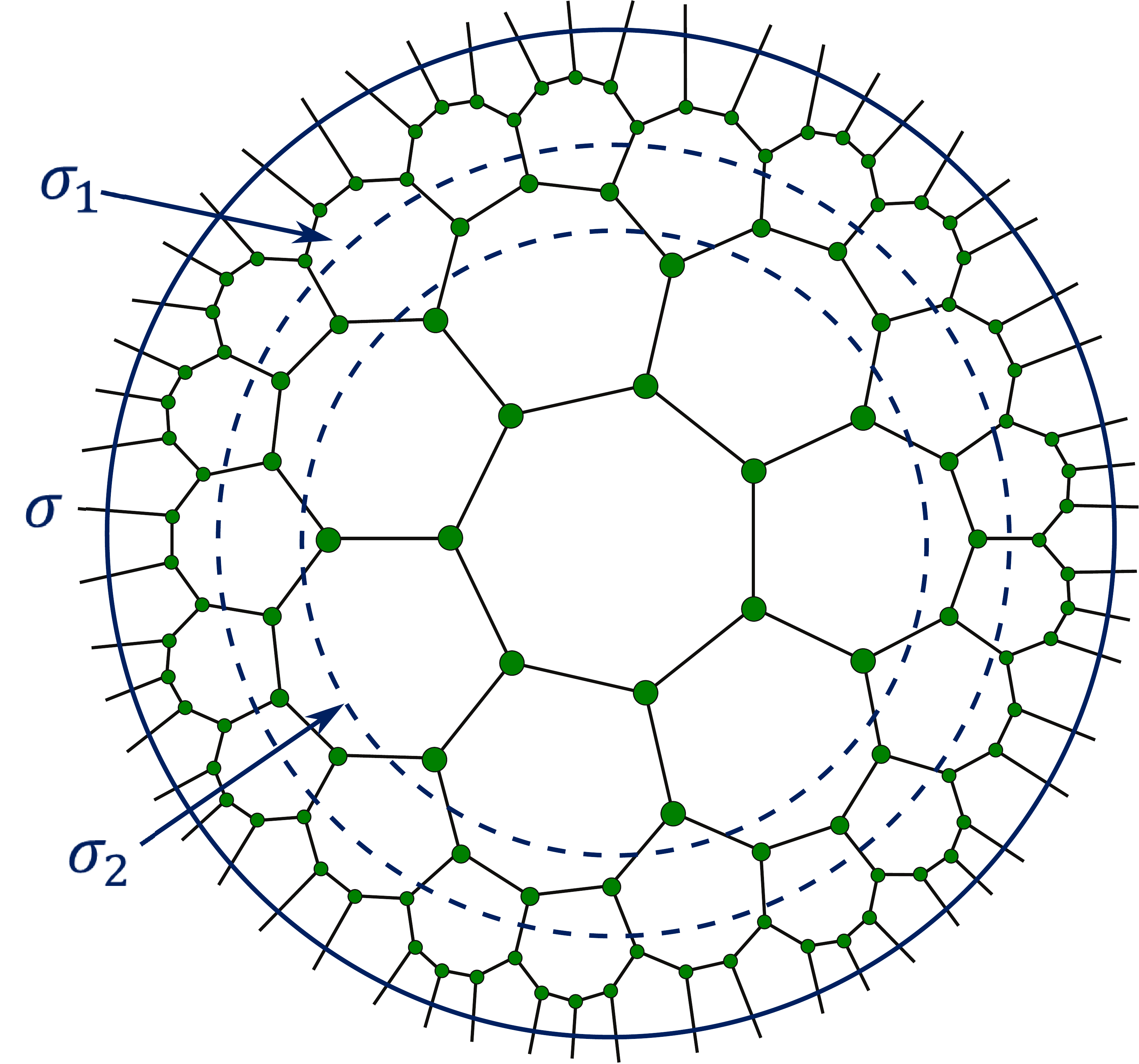}
\caption{A TN defines a boundary state in the Hilbert space $\mathcal{H}_{\sigma}$ at the outer legs. One can, however, also consider ``coarse-grained'' states defined at inner layers, e.g.\ states defined in Hilbert spaces $\mathcal{H}_{\sigma_1}$ and $\mathcal{H}_{\sigma_2}$.}
\label{fig:tennet}
\end{figure}

A key idea is to realize that a TN defines a sequence of states that can be generated by including fewer tensors, layer by layer, as shown in Fig.~\ref{fig:tennet}.
For example, one can consider a state defined on the outermost legs which lives in Hilbert space $\mathcal{H}_{\sigma}$.
A coarse-grained version of this state can then be given by a smaller TN that is obtained by peeling off the outermost layer. This state lives in a smaller Hilbert space $\mathcal{H}_{\sigma_1}$, and the TN provides an isometric map from $\mathcal{H}_{\sigma_1}$ to $\mathcal{H}_{\sigma}$.
The sequence can then continue, giving a series of Hilbert spaces $\mathcal{H}_{\sigma_2}$, $\mathcal{H}_{\sigma_3}$, and so on.

\begin{figure}[t]
  \subfloat[]{%
  \includegraphics[clip,width=0.8\columnwidth]{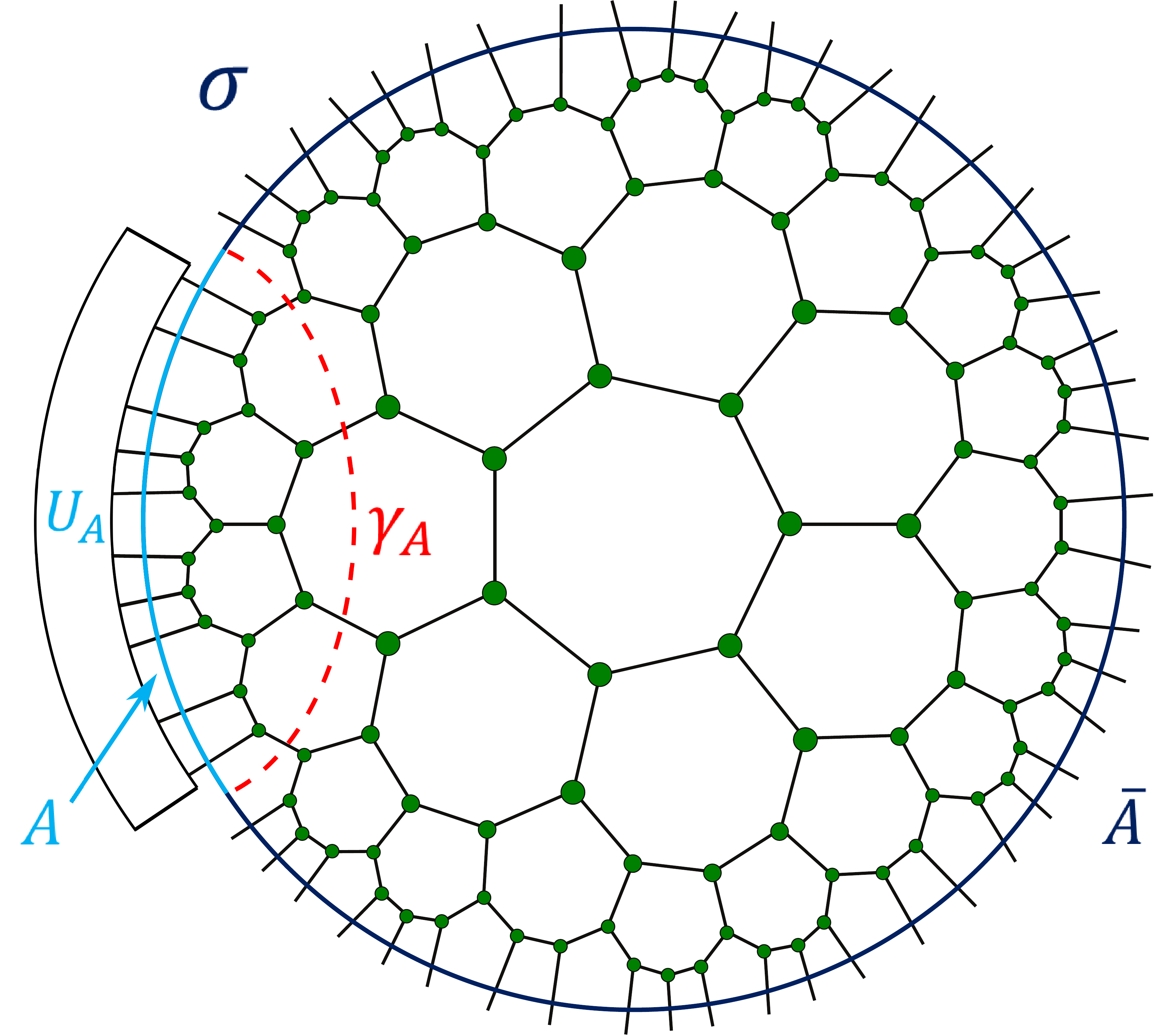}
  \label{fig:tencut1}}

  \subfloat[]{%
  \includegraphics[clip,width=0.8\columnwidth]{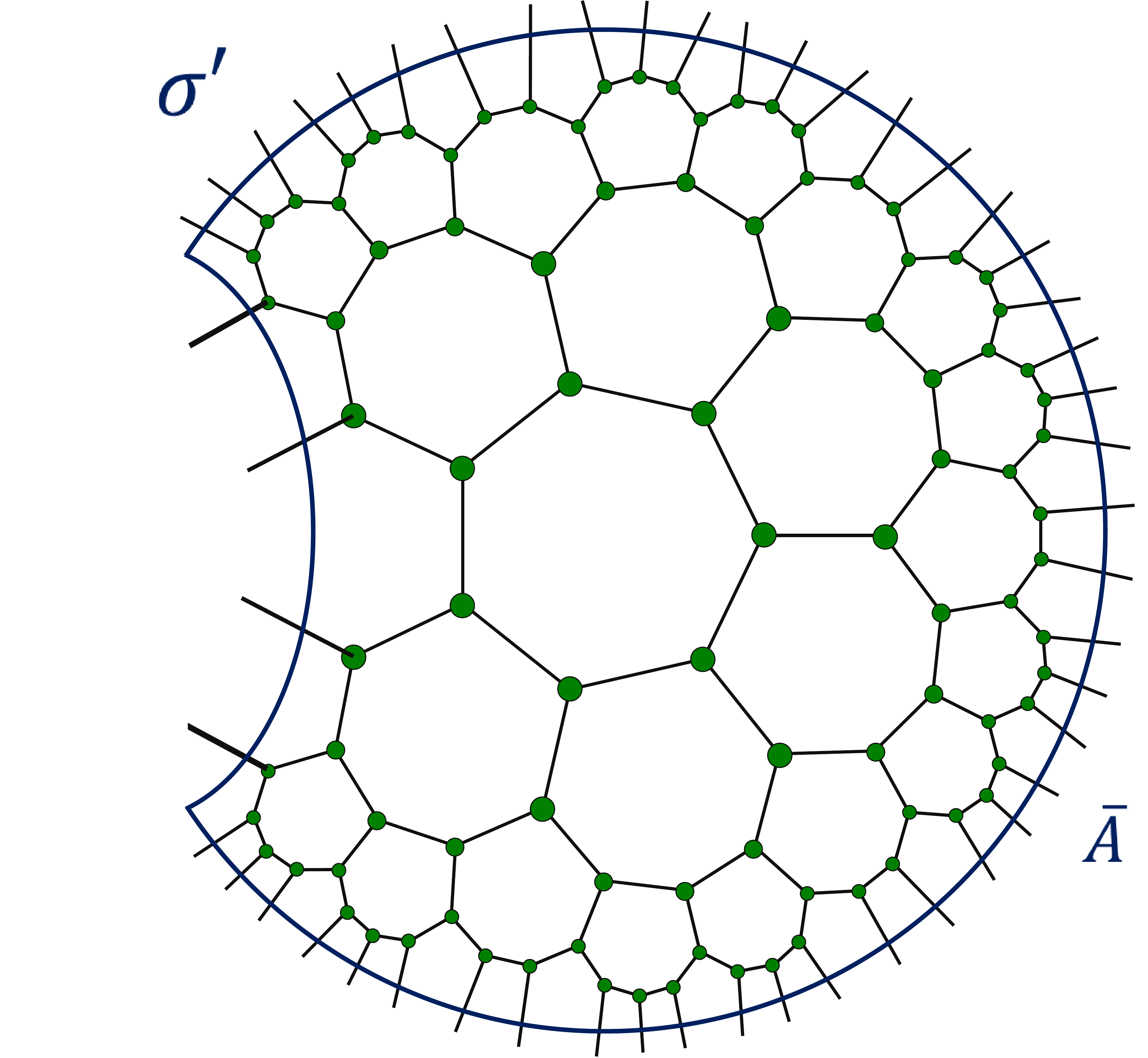}%
  \label{fig:tencut2}}
\caption{a) The von~Neumann entropy of subregion $A$ is computed by the minimal cut $\gamma_A$ that splits the TN into two parts containing $A$ and $\overline{A}$ respectively.
b) By applying a local unitary on $A$, we can find maximally entangled legs across $\gamma_A$, which serve as a bottleneck for the entanglement between $A$ and $\overline{A}$. }
\label{fig:tencut}
\end{figure} 

In fact, this peeling-off procedure can be decomposed further into smaller steps.
For any subregion $A$ of a given boundary, there is an isometric map from the in-plane legs at the RT surface $\gamma_A$ to the boundary legs in $A$~\cite{Pastawski:2015qua,Hayden:2016cfa}.
This implies that a particular subspace of the boundary subregion legs, corresponding to the in-plane legs at $\gamma_A$, is maximally entangled with the complementary subregion $\overline{A}$ via $\gamma_A$, which acts as an entanglement ``bottleneck''; see Fig.~\ref{fig:tencut}.
All the other subregion legs can be disentangled by applying unitary $U_A$ that acts locally within $A$.

Therefore, if one is to preserve only long range entanglement while getting rid of short range entanglement, one could compress the state down to that defined at the legs of the surface $\sigma' = \gamma_A \cup \overline{A}$.
This reduces the size of the effective Hilbert space, mapping a pure state in the larger boundary Hilbert space $\mathcal{H}_{\sigma}$ to a pure state in a smaller boundary Hilbert space $\mathcal{H}_{\sigma'}$.
This can be done by considering small subregions of $\sigma$ and truncating the TN to end at $\sigma'$.
One useful way to interpret this step is that we are retaining the complementary entanglement wedge $\text{EW}(\overline{A})$.
This step can then be repeated multiple times to generate a sequence of states, all of which are increasingly coarse-grained.%
\footnote{We note that this is similar to the construction suggested in Refs.~\cite{Bao:2018pvs,Bao:2019fpq}, although here we directly use the TN description and its fine structure, as opposed to constructing the TN using information about the boundary state such as entanglement of purification~\cite{Takayanagi:2017knl,Nguyen:2017yqw} or reflected entropy~\cite{Dutta:2019gen}.}

\begin{figure}[t]
  \includegraphics[clip,width=0.9\columnwidth]{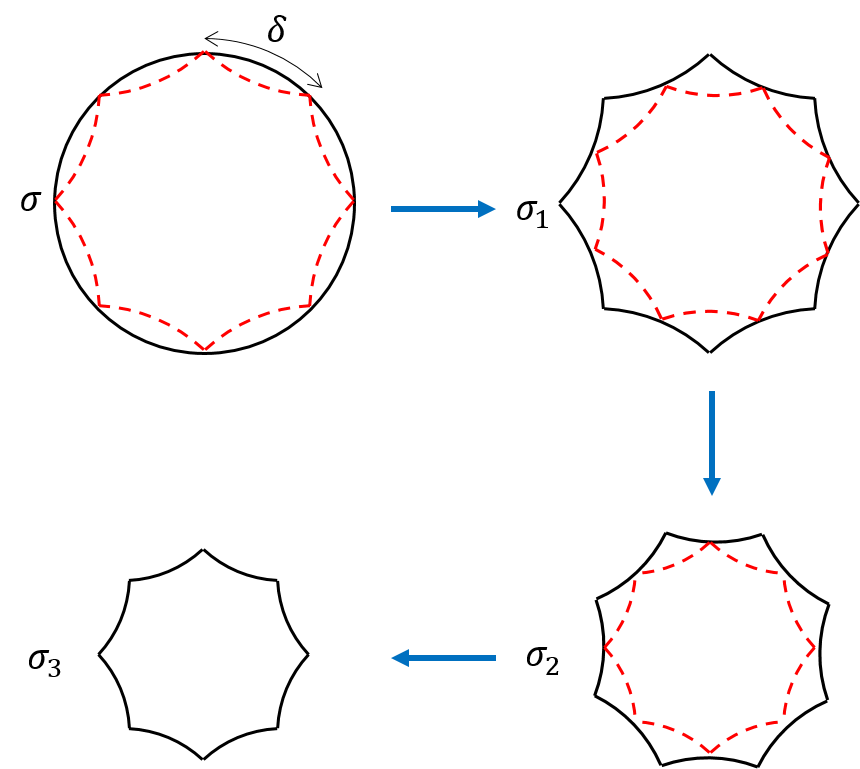}
\caption{A sequence of coarse-graining steps. At each step, we consider infinitesimal subregions of size $\delta$ ($\rightarrow 0$) and reduce the spacetime region to their respective complementary entanglement wedges.}
\label{fig:leafren}
\end{figure}

Now, we apply this idea to a general spacetime $\mathcal{M}$ by considering infinitesimal subregions $A$ of size $\delta$ ($\rightarrow 0$) on the boundary leaf $\sigma$.
In order to coarse-grain, we find the HRT surface $\gamma_A$ and reduce the accessible spacetime region to the complementary entanglement wedge, $\text{EW}(\overline{A})$.
Repeating this multiple times involves shrinking the domain of dependence at each step by finding new HRT surfaces anchored to infinitesimal subregions as seen in Fig.~\ref{fig:leafren}.

In the continuum limit, this reduces to the original construction of Ref.~\cite{Nomura:2018kji}; see Fig.~\ref{fig:continuum}.
Here, we consider the intersection of the complementary entanglement wedges $\text{EW}(\overline{A_p})$ for infinitesimal subregions $A_p$, centered around arbitrary points on the leaf, denoted by $p$.
This leads to a new domain of dependence $R(\sigma)$,
\begin{align}\label{eq:intersect}
    R(\sigma) = \cap_{p}\, \text{EW}(\overline{A_p}),
\end{align}
which can be interpreted as defining the state on a new ``renormalized'' leaf $\sigma_1$ such that the domain of dependence of $\sigma_1$ is $R(\sigma)$, i.e.\ $D(\sigma_1) = R(\sigma)$.%
\footnote{The domain of dependence of a closed codimension-2 surface is defined as the domain of dependence of a spacelike hypersurface enclosed by the surface.}
The HRT prescription can be shown to consistently apply for subregions on this renormalized leaf as well, owing to the fact that it is still a convex surface~\cite{Nomura:2018kji}.
Thus, we may interpret this as the spacetime continuum version of the procedure yielding the sequence of states described above using TNs.

\begin{figure}[t]
  \includegraphics[clip,width=0.8\columnwidth]{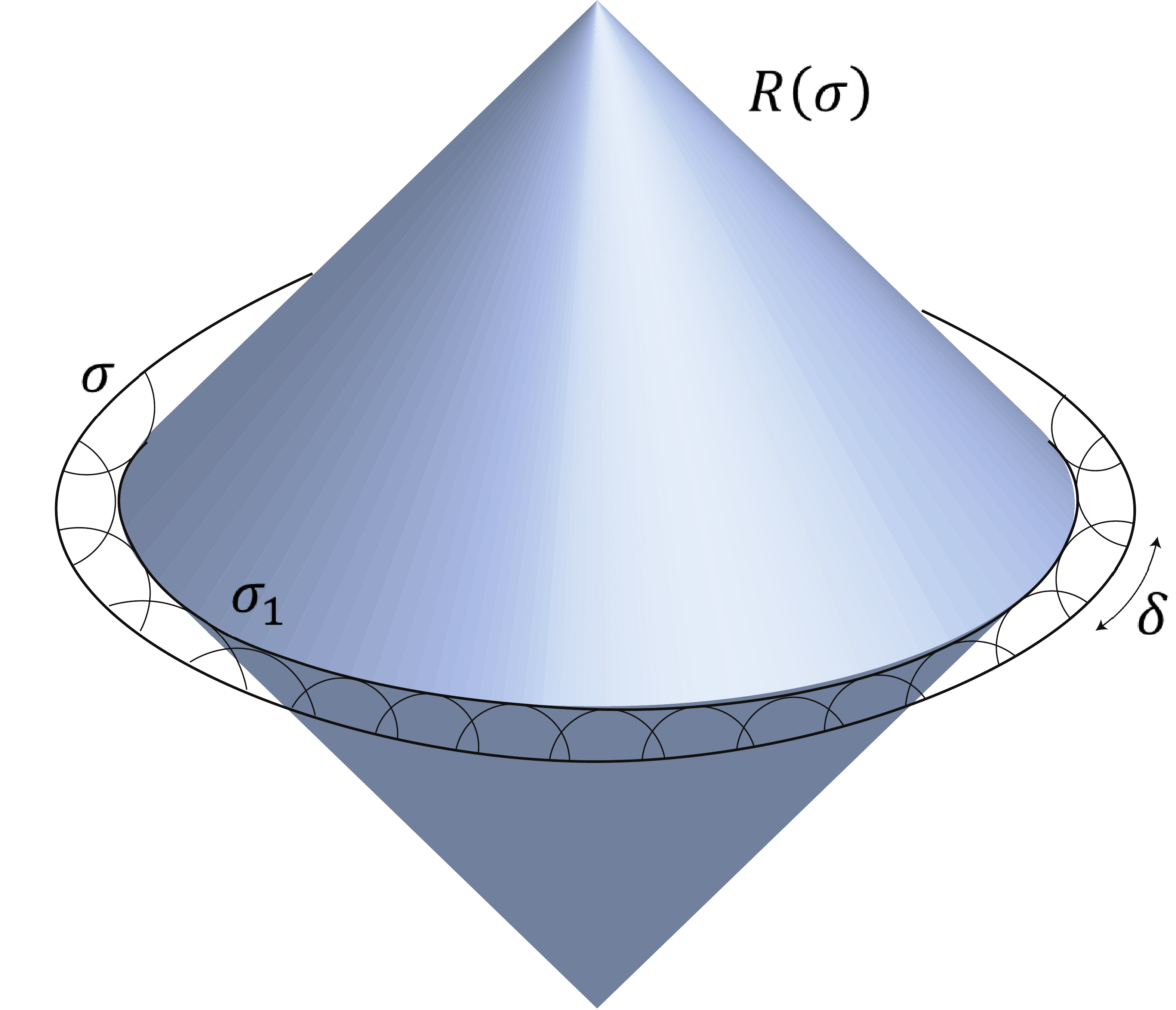}
\caption{Coarse-graining over infinitesimal subregions on $\sigma$ can be performed by considering the intersection of complementary entanglement wedges. This leads to a domain of dependence $R(\sigma)$ which corresponds to a new renormalized leaf $\sigma_1$.}
\label{fig:continuum}
\end{figure}

This continuum procedure can be written in terms of a flow equation for the leaf $\sigma(\lambda)$, which is interpreted as a Lorentzian mean curvature flow:
\begin{align}\label{eq:classicalflow}
    \frac{dx^\mu}{d\lambda}&= \frac{1}{2}(\theta_k l^\mu + \theta_l k^\mu),
\end{align}
where $x^\mu$ are the embedding coordinates of $\sigma(\lambda)$, $\{ k^\mu, l^\mu \}$ are the future-directed null vectors orthogonal to $\sigma(\lambda)$, normalized such that $k \cdot l = -2$, and $\theta_{k,l}$ are the classical expansions in the $k, l$ directions, respectively.
The sequence of renormalized leaves spans a codimension-1 hypersurface, which was termed the holographic slice.
In particular, it is a partial Cauchy slice of the bulk domain of dependence, $D(\sigma)$, of the original leaf $\sigma$.

It was shown that the flow described above satisfies various interesting properties that are consistent with the coarse-graining interpretation.
These include the fact that the area of the leaf $\sigma(\lambda)$ decreases monotonically with $\lambda$, implying that the number of degrees of freedom in the effective Hilbert space $\mathcal{H}_{\rm eff}(\sigma(\lambda))$ decreases as we flow.
By choosing statistically isotropic subregions with random shapes, one can obtain a preferred holographic slice that preserves the symmetries of the system.
Alternatively, by varying flow rates along the transverse directions, one could get a range of different, but gauge equivalent, slices of $D(\sigma)$.

%-----------------------------------------------------------------------

\section{Motivation from Tensor Networks}
\label{sec:motive}

Having the classical construction in hand, we now describe how to generalize it to include bulk quantum corrections.
Let us take a specific state defined on a leaf of a Q-screen.
We want to understand how coarse-graining of this state works using the TN picture.

We expect that the state is still modeled by a TN at the quantum level.
In order to represent the effect of bulk quantum corrections appropriately, this TN must include two additional features compared with the classical case.
First, tensors and bonds used in the network should in general not be all ``featureless,'' i.e.\ all the tensors being the same and connected by maximally entangling bonds, as was the case in simple perfect tensor~\cite{Pastawski:2015qua} or random tensor~\cite{Hayden:2016cfa} networks.
Reflecting the existence of excitations of bulk low-energy fields, tensors and/or bonds must have a non-universal structure representing such excitations.
This generally makes the network not fully isometric.
Second, since bulk low-energy quantum fields can have long-range entanglement, corresponding to $S_\mathrm{bulk}$ in $S_\mathrm{gen}$, there should be longer bonds connecting non-nearest-neighbor tensors, although the number of such non-local bonds (or more precisely, the total dimension associated with them) is suppressed generally as the bonds become longer.
A typical TN of this sort is depicted in Fig.~\ref{fig:q-TN}.

\begin{figure}[t]
  \includegraphics[clip,width=0.8\columnwidth]{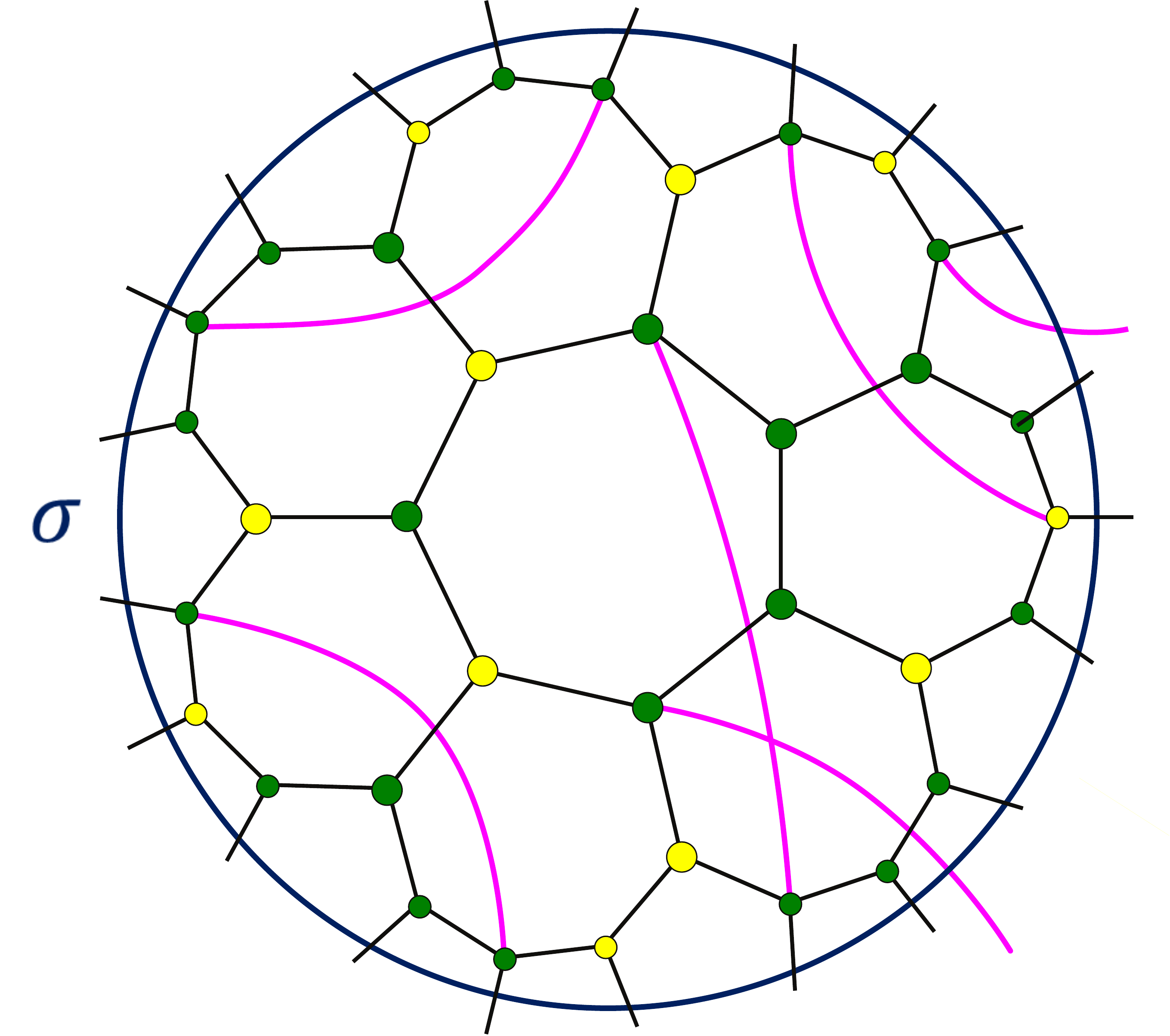}
\caption{A TN representing a state at the quantum level has tensors that are not universal (yellow) and bonds that connect tensors nonlocally (pink).}
\label{fig:q-TN}
\end{figure}

Once we have a TN representation of the state, the scenario in Sec.~\ref{sec:review} can be generalized in a relatively simple manner.
To do so, we must first establish how to compute the entropy of a boundary subregion, following the formula in Eq.~(\ref{eq:QES}).
In general, the boundary legs consist of both the shortest, local bonds and longer, nonlocal bonds cut by the boundary.
When we compute the entropy of subregion $A$ of $\sigma_{\text{int}}$, i.e.\ a subset of these legs, we must minimize
\begin{equation}
    \tilde{S}_\mathrm{gen}(A, X_A) = \frac{\mathcal{A}(A \cup X_A)}{4G_N} + S_\mathrm{bulk}(\Xi_A)
\label{eq:S_gen-2}
\end{equation}
over all surfaces $X_A$ anchored to the boundary of $A$, where $\Xi_A$ is the homology surface with $\partial\Xi_A = A \cup X_A$.
In this expression, the area term represents the contribution from the shortest bonds, while $S_\mathrm{bulk}(\Xi_A)$ from longer bonds, cut by $\partial\Xi_A$.
In short, $\tilde{S}_\mathrm{gen}(A, X_A)$ is given by the entropy of all the bonds connecting tensors inside and outside $\partial\Xi_A$, regardless of their lengths; see Fig.~\ref{fig:q-wedg}.
(This reflects the fact that the precise way to separate the contributions from local and nonlocal bonds is arbitrary and does not have an invariant meaning.)

\begin{figure}[t]
\centering
\includegraphics[clip,width=0.8\columnwidth]{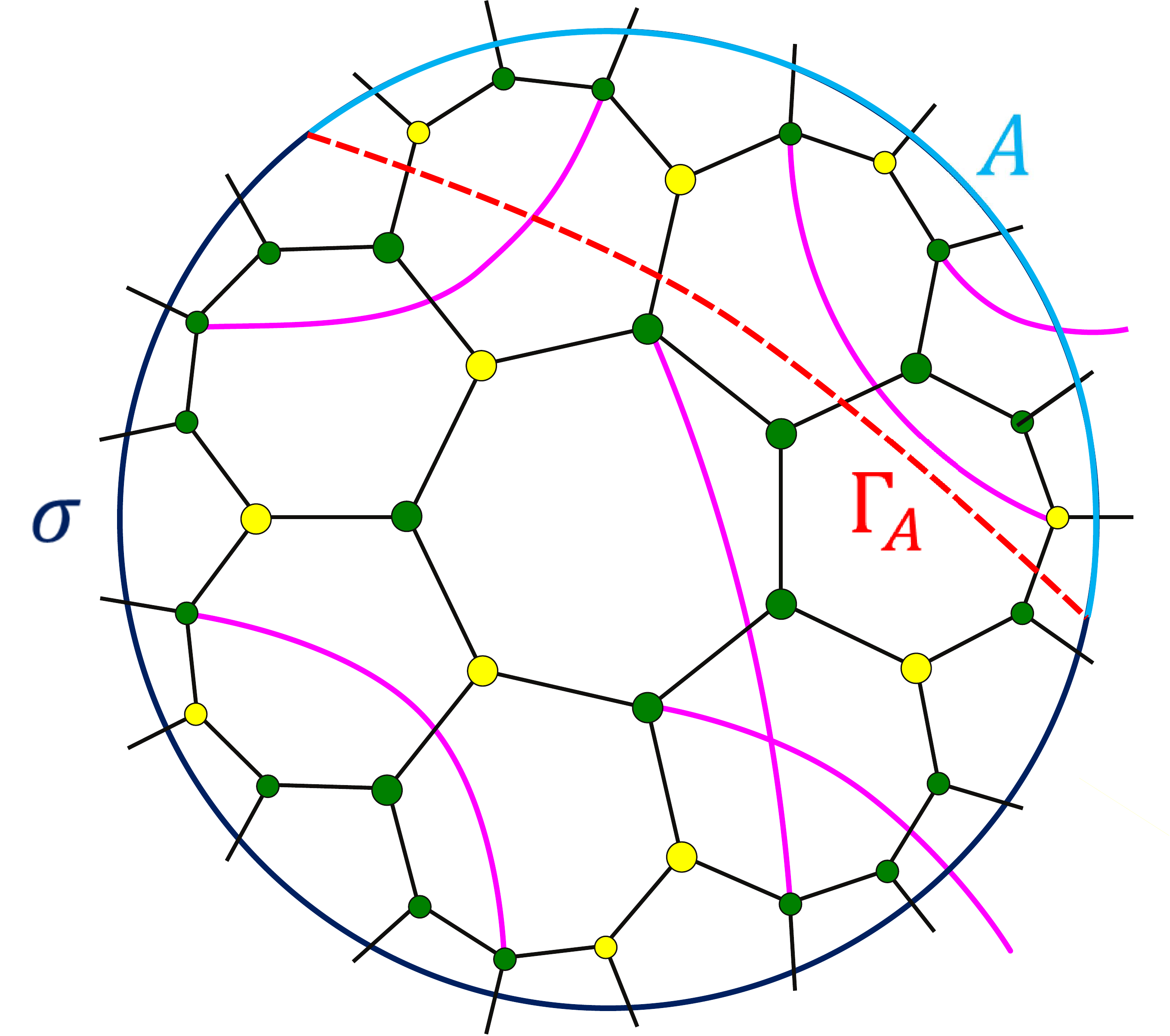}
\caption{The minimal QES $\Gamma_A$ of subregion $A$ is determined by minimizing the entropy of all bonds connecting tensors inside and outside $A \cup \Gamma_A$ (consisting of 12 black and 2 pink bonds in the figure).}
\label{fig:q-wedg}
\end{figure}

With this prescription, we can follow the analysis in Sec.~\ref{sec:review} and coarse-grain the region $A$ of the boundary state by removing the bulk regions that are entangled with $A$, i.e.\ by reducing the TN to a smaller one giving a state on
\begin{align}\label{eq:W}
    \sigma' = \overline{A} \cup \Gamma_{\overline{A}}.
\end{align}
Note that here the complement $\overline{A}$ of $A$ is defined as that on the entire leaf $\sigma = \sigma_{\text{int}} \cup \sigma_{\text{ext}}$, not on $\sigma_{\text{int}}$.
The QES $\Gamma_{\overline{A}}$ is given by the surface $X_{\overline{A}}$ minimizing
\begin{equation}
\label{eq:Sgen}
  \tilde{S}_\mathrm{gen}(\overline{A}, X_{\overline{A}}) = \frac{\mathcal{A}(\overline{A} \cup X_{\overline{A}})}{4G_N} + S_\mathrm{bulk}(\Xi_{\overline{A}}),
\end{equation}
where
\begin{equation}
    \partial X_{\overline{A}} = \partial \overline{A} = \partial \overline{A}_{\text{int}}
\end{equation}
with $\overline{A}_{\text{int}}$ the complement of $A$ on $\sigma_{\text{int}}$ (i.e.\ $\overline{A} = \overline{A}_{\text{int}} \cup \sigma_{\text{ext}}$), and
$\partial \Xi_{\overline{A}} = \overline{A} \cup X_{\overline{A}}$, meaning that
\begin{equation}
    \Xi_{\overline{A}} = \Xi_{\overline{A},\text{int}} \cup \Sigma_{\text{ext}}.
\end{equation}
Here, $\partial \Xi_{\overline{A},\text{int}} = \overline{A}_{\text{int}} \cup X_{\overline{A}}$, and $\Sigma_{\text{ext}}$ is a spacelike hypersurface exterior to $\sigma_{\text{ext}}$.
Note that for $\sigma_{\text{ext}}$, we have defined the surface ``enclosed'' by $\sigma_{\text{ext}}$ to be the exterior of $\sigma_{\text{ext}}$:
\begin{equation}
    \partial \Sigma_{\text{ext}} = \sigma_{\text{ext}}.
\end{equation}

This procedure gives the interior portion of the new leaf to be
\begin{equation}
    \sigma'_{\text{int}} = \overline{A}_{\text{int}} \cup \Gamma_{\overline{A}}.
\end{equation}
We assume that for a TN representing a state with a semiclassical bulk, the RT formula with quantum corrections can be applied to the state on this new surface as well.
The process described here can be repeated multiple times, leading to a similar sequence of coarse-grained states as before.

Note that the assumption of the RT formula continuing to hold is nontrivial given that generic bulk states break the isometric property of the TN.
However, we will only need to assume that the RT formula holds for infinitesimal subregions and their complements, which gives results that are consistent with the coarse-graining interpretation suggested here.
As discussed in Sec.~\ref{sec:framework}, one could also consider bulk matter with a large central charge so that the non-isometric behavior appears only at subleading order in $1/c$ for reasonable bulk states~\cite{Bao:2018pvs,Bao:2019fpq}.

%-----------------------------------------------------------------------

\section{Coarse-Graining and Quantum Flow}
\label{sec:main}

\subsection{Definition}

Having found the procedure in TNs, we can now look for a continuum version in semiclassical gravity.
Given the framework established in Sec.~\ref{sec:framework}, we can locate the Q-screen for a given state and start a coarse-graining procedure analogous to that discussed in Sec.~\ref{sec:motive}.

As described in Sec.~\ref{sec:review}, we consider an infinitesimal subregion on the interior portion $\sigma_{\text{int}}$ of the original leaf $\sigma$ and reduce the accessible spacetime region to the complementary quantum entanglement wedge $\text{QEW}(\overline{A})$, which is determined by the minimal QES $\Gamma_{\overline{A}}$ of $\overline{A}$ such that $\text{QEW}(\overline{A}) = D(\Sigma_{\overline{A}})$, where $\partial \Sigma_{\overline{A}} = \overline{A} \cup \Gamma_{\overline{A}}$.
Note that in a general spacetime, the global description includes an exterior portion outside $\sigma_{\text{ext}}$.
Thus, the complement of an infinitesimal subregion $A \subset \sigma_{\text{int}}$ on the leaf is $\overline{A} = \overline{A}_{\text{int}} \cup \sigma_{\text{ext}}$, and the bulk entropy term $S_{\text{bulk}}$ of the generalized entropy is given by the von~Neumann entropy of $\Sigma_{\overline{A}} = \Sigma_{\overline{A},\text{int}} \cup \Sigma_{\text{ext}}$.
The necessity of including the region exterior to $\sigma_{\text{ext}}$ can be argued from complementary recovery in pure states.

Considering many such infinitesimal subregions $A_p$ on $\sigma_{\text{int}}$ centered around points $p$ as in Eq.~(\ref{eq:intersect}), we can sequentially reduce the accessible spacetime region to
\begin{align}\label{eq:intersectquantum}
    R(\sigma) = \cap_{p} \, \text{QEW}(\overline{A_p}),
\end{align}
which leads to a renormalized leaf $\sigma_1$ such that $D(\Sigma_1) = R(\sigma)$, where $\partial \Sigma_1 = \sigma_1$.
This yields a new boundary state in a smaller effective Hilbert space defined on $\sigma_1$.
As we show in Appendix~\ref{app:QES}, the convexity of the original leaf implies that the corresponding renormalized leaf is also convex, which ensures that the coarse-graining prescription can be repeatedly applied.
The preservation of convexity also means that $S(A)$ of a subregion $A$ of a renormalized leaf obtained using Eq.~(\ref{eq:QES}) satisfies properties needed for it to be interpreted as the von~Neumann entropy of the density matrix of the region.

In the continuum limit, the behavior of QESs anchored to small subregions can be studied analytically.
While the von~Neumann entropy can in general show complicated behaviors as the subregion is varied, for an infinitesimal subregion we may expect that such behaviors arise only from physics at scales much larger than the size of the subregion.
It is then reasonable to assume that the change of the entropy of the subregion, as well as that of the complement, can be approximated by the volume integral of some density function.
With this assumption, and reasonable smoothness assumptions about the spacetime and subregions, we show in Appendix~\ref{app:flowprf} that the resulting QESs are such that the deepest point lies in a universal normal direction to the leaf given by
\begin{equation}\label{eq:flowdirn}
    s = \frac{1}{2} \left( \Theta_k l + \Theta_l k \right),
\end{equation}
as long as the relevant QESs exist.
Here, $\{ k^\mu, l^\mu \}$ are the future-directed null vectors orthogonal to $\sigma_{\text{int}}$, normalized such that $k \cdot l = -2$, and $\Theta_{k,l}$ are the corresponding quantum expansions.
Here, $\Theta_{k,l}$ are computed by varying $S_\mathrm{gen}$ as defined in Eq.~(\ref{eq:def-S_gen}).
This is sufficient to find the location of $\sigma_{1,\text{int}}$, and hence of $\sigma_1$, after fixing relative normalizations for the size of subregions considered on different portions of the leaf.
For convenience, we will choose the normalizations such that the resulting flow equation takes the simplest form.
Other possibilities will be discussed in Sec.~\ref{subsubsec:free}.

Following the procedure described above, we can derive a flow equation, which generalizes the Lorentzian mean curvature flow in Eq.~(\ref{eq:classicalflow}) to include bulk quantum corrections:
\begin{equation}\label{eq:quantumflow}
    \frac{dx^\mu}{d\lambda}=\frac{1}{2}(\Theta_k l^{\mu} + \Theta_l k^{\mu}),
\end{equation}
where $x^\mu$ are the embedding coordinates of the interior portion $\sigma_{\text{int}}(\lambda)$ of the renormalized leaves $\sigma(\lambda) = \sigma_{\text{int}}(\lambda) \cup \sigma_{\text{ext}}$, parameterized by $\lambda$, and $\Theta_{k,l}$ represent the quantum expansions of $\sigma(\lambda)$ at $x^\mu$.
The resulting sequence of $\sigma_{\text{int}}(\lambda)$ spans a codimension-1 quantum-corrected holographic slice as shown in Fig.~\ref{fig:slice}.

\begin{figure}[t]
\centering
\includegraphics[clip,width=0.7\columnwidth]{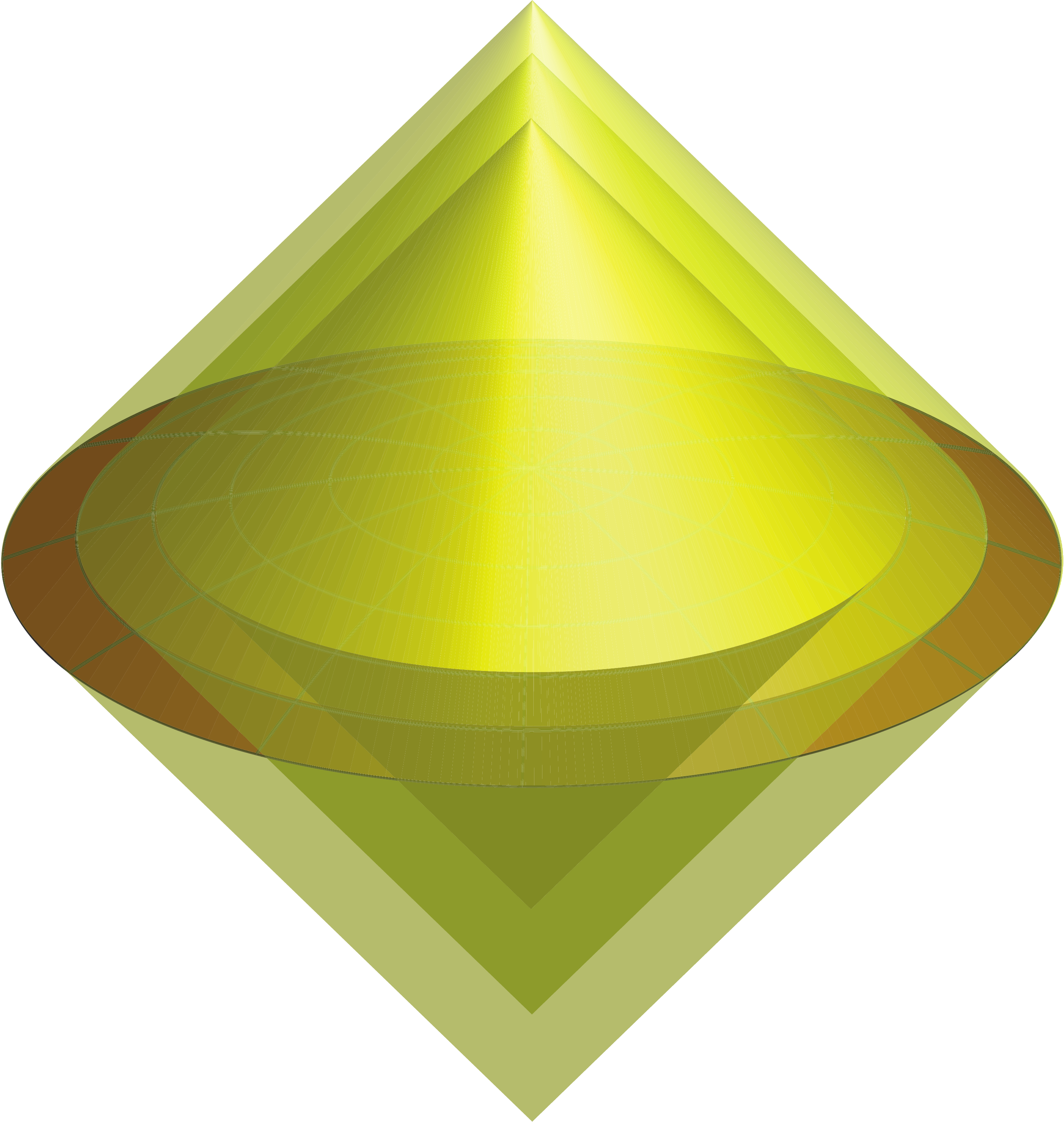}
\caption{A sequence of renormalized leaves $\sigma_{\text{int}}(\lambda)$ obtained by solving the flow equation in Eq.~(\ref{eq:quantumflow}) spans a codimension-1 quantum-corrected holographic slice.
Each leaf represents the domain of dependence of a spacelike surface $\Sigma_{\text{int}}(\lambda)$ with $\partial\Sigma_{\text{int}}(\lambda) = \sigma_{\text{int}}(\lambda)$.}
\label{fig:slice}
\end{figure}

\subsubsection{Possibility of appearance of disconnected leaf portions}
\label{subsubsec:excl}

While performing the coarse-graining as described above, it may occur that the minimal QES $\Gamma_{\overline{A_p}}$ for $\overline{A_p}$ in Eq.~(\ref{eq:intersectquantum}) becomes non-infinitesimal.
In particular, if there is a bulk region surrounded by a surface $X$ of area $\mathcal{A}(X)$ and whose entanglement with the exterior of $\sigma_{\text{ext}}$ exceeds $\mathcal{A}(X)/4 G_N$ on a given spacelike slice containing $\sigma_i = \sigma_{\text{int},i} \cup \sigma_{\text{ext}}$, then the quantum minimal surface $\chi(\overline{A_p})$ on it has a disconnected component surrounding the region.
If such a disconnected component remains after the maximization over all the spacelike slices, then the minimal QES $\Gamma(\overline{A_p})$ does have a disconnected component, and as a consequence $R(\sigma)$ will have a ``hole'' such that $\partial R(\sigma) \supset X$.
This makes the new leaf $\sigma_{\text{int},i+1}$ have a disconnected component $X$ ($\subset \sigma_{\text{int},i+1}$), in addition to the portion infinitesimally close to $\sigma_{\text{int},i}$.
The region surrounded by $X$ is thus excluded from the flow afterward.

When it first appears, an excluded region, and hence the disconnected component of the leaf associated with it, is small.
This appearance cannot be seen just by solving the flow equation, although our coarse-graining procedure itself captures the occurrence of this phenomenon.
After its appearance, the disconnected component of the leaf also flows generally, following the flow equation.
This makes the hole of the spacetime larger, which may eventually collide with the component arising from the continuous inward motion of the original leaf portion $\sigma_{\text{int}}$.

\subsection{Properties}

We now illustrate some of the salient properties of the flow, showing the consistency of it being interpreted as coarse-graining.

\subsubsection{Monotonicity of generalized entropy of renormalized leaves}

In order to interpret our procedure as coarse-graining, the number of degrees of freedom must decrease monotonically with $\lambda$.
The dimension of the effective Hilbert space $\mathcal{H}_{\text{eff}}(\sigma(\lambda))$ associated with leaf $\sigma(\lambda)$ can be defined as the amount of entropy the boundary legs carry in the TN picture, implying
\begin{align}
    \ln|\mathcal{H}_{\text{eff}}(\sigma(\lambda))| &= \frac{\mathcal{A}(\sigma(\lambda))}{4G_N} + S_{\text{bulk}}(\Sigma(\lambda))
    \nonumber\\
    &= S_{\text{gen}}(\sigma(\lambda)),
\label{eq:dim-H_eff}
\end{align}
where $|\mathcal{H}|$ represents the dimension of $\mathcal{H}$, and $\Sigma(\lambda)$ is a bulk codimension-1 spacelike surface bounded by $\sigma(\lambda) = \sigma_{\text{int}}(\lambda) \cup \sigma_{\text{ext}}$, i.e.\ $\Sigma(\lambda) = \Sigma_{\text{int}}(\lambda) \cup \Sigma_{\text{ext}}$.
We thus find that the condition for the decrease of the degrees of freedom is the same as the statement that the generalized entropy of the renormalized leaf $\sigma(\lambda)$ decreases monotonically with $\lambda$.
We now prove this in a manner similar to Ref.~\cite{Nomura:2018kji}.

We have defined the evolution vector $s$, which is tangent to the holographic slice $\Upsilon$ and radially evolves the interior leaf portion inward:
\begin{equation}
    s = \frac{1}{2} \left( \Theta_k l + \Theta_l k \right),
\end{equation}
where the associated quantum expansions satisfy
\begin{equation}
    \Theta_s = \Theta_k \Theta_l \leq 0,
\end{equation}
as shown in Appendix~\ref{app:QES}.

Consider a point $p$ on the leaf portion $\sigma_{\text{int}}(\lambda)$ and the $s$ vector orthogonal to $\sigma_{\text{int}}(\lambda)$ at $p$.
Next consider an infinitesimal patch of area $\delta A$ around $p$. 
As we flow along $s$ by a small amount, the rate at which $S_{\text{gen}}(\sigma(\lambda))$ changes is determined by the quantum expansion $\Theta_s$ as
\begin{equation}
    \delta S_{\text{gen}} \propto \Theta_s \delta \mathcal{A} \leq 0.
\end{equation}
This implies that the contribution to $S_{\text{gen}}(\sigma(\lambda))$ from the inward flow of any infinitesimal patch is negative, and hence $S_{\text{gen}}(\sigma(\lambda))$ must decrease with $\lambda$.

The argument above relies on the flow equation.
However, as shown in Section~\ref{subsubsec:excl}, it is possible that on coarse-graining, we obtain a new disconnected component of $\sigma_{\text{int}}(\lambda)$.
While the appearance of such a component cannot be described by the flow equation, it comes with a negative contribution to the generalized entropy of the renormalized leaf.
Thus, even on including this effect, we find that $S_{\text{gen}}(\sigma(\lambda))$ decreases with $\lambda$.

\subsubsection{Containment of subregion flow}

Consider the situation where we apply the holographic slice construction only to a finite subregion $A$ of the leaf portion $\sigma_{\text{int}}$.
This yields a sequence of renormalized leaves given by $\sigma(\lambda) = A(\lambda) \cup \overline{A}$.
Here, $A(\lambda)$ represents a sequence of subregions that arise from the radial evolution of A.

Now, because of entanglement wedge nesting
\begin{equation}
    \text{QEW}(\overline{A}) \subset \text{QEW}(\sigma(\lambda))
\label{eq:nesting}
\end{equation}
for arbitrary $\lambda$, since $\overline{A} \subset \sigma(\lambda)$ for all $\lambda$.
Here, $\text{QEW}(\overline{A})$ and $\text{QEW}(\sigma(\lambda))$ are determined by the corresponding minimal QESs.
This implies that the boundary of $\text{QEW}({\overline{A}})$ acts as an extremal surface barrier for the flow of $A(\lambda)$. 
In particular, $A(\lambda)$ remains outside $\text{QEW}({\overline{A}})$  for all $\lambda$.

Incidentally, if there is another QES anchored to $\partial \overline{A}$ which lies outside $\text{QEW}({\overline{A}})$, then $A(\lambda)$ would not be able to go beyond this non-minimal extremal surface.

\subsubsection{Remaining freedom}
\label{subsubsec:free}

In general, the proof in Appendix~\ref{app:flowprf} allows us to fix the direction of the flow at each point of $\sigma_{\text{int}}(\lambda)$ to be the vector $s$ as discussed in Eq.~(\ref{eq:flowdirn}).
However, there is no canonical choice of normalization, reflecting the arbitrariness of choosing relative sizes of subregions for different points on $\sigma_{\text{int}}(\lambda)$.
The ratio of these sizes must stay finite in the continuum limit, and yet it can still lead to inequivalent flow equations parameterized as
\begin{equation}
    \frac{dx^\mu}{d\lambda}=\alpha(y^i,\lambda)(\Theta_k l^{\mu} + \Theta_l k^{\mu}),
\end{equation}
where $y^i$ represents the tangential coordinates on $\sigma_{\text{int}}(\lambda)$, and $\alpha(y^i,\lambda) > 0$.
These flow equations in general result in different holographic slices, which are all gauge equivalent by the equations of motion.
By choosing subregions of the same characteristic size at all $p$, we can fix the preferred normalization that leads to Eq.~(\ref{eq:quantumflow}).

We note that this provides a natural gauge choice motivated by holography; the spacetime inside the holographic screen, which is now the Q-screen $H'$, is parameterized by $\lambda$, $y_i$, and $t$, where $t$ is a time parameter on the holographic screen giving a sequence of leaves at different times.
(If disconnected components of $\sigma_{\text{int}}(\lambda)$ appear during the flow, then we must extend $y_i$ to incorporate those components.)

An alternative choice for the normalization is to take $\lambda$ to be the proper length along the trajectory $p(\lambda)$ of a point on $\sigma_{\text{int}}(\lambda)$.
Here, the trajectory is defined such that if a point on $\sigma_{\text{int}}(\lambda + d\lambda)$ is located on the 2-dimensional surface orthogonal to $\sigma_{\text{int}}(\lambda)$ at $p(\lambda)$, then it is regarded as the ``same'' point as $p(\lambda)$, i.e.\ $p(\lambda + d\lambda)$.
This provides another natural gauge choice motivated by holography.

\subsection{End of the Flow}
\label{subsec:end}

The quantum flow procedure described above provides a way to probe a spacetime inside the holographic screen by following the holographic slice inward.
A key qualitative feature of the spacetime is given by how and where the holographic slice ends.

\begin{figure}[t]
  \subfloat[]{%
  \includegraphics[clip,width=0.8\columnwidth]{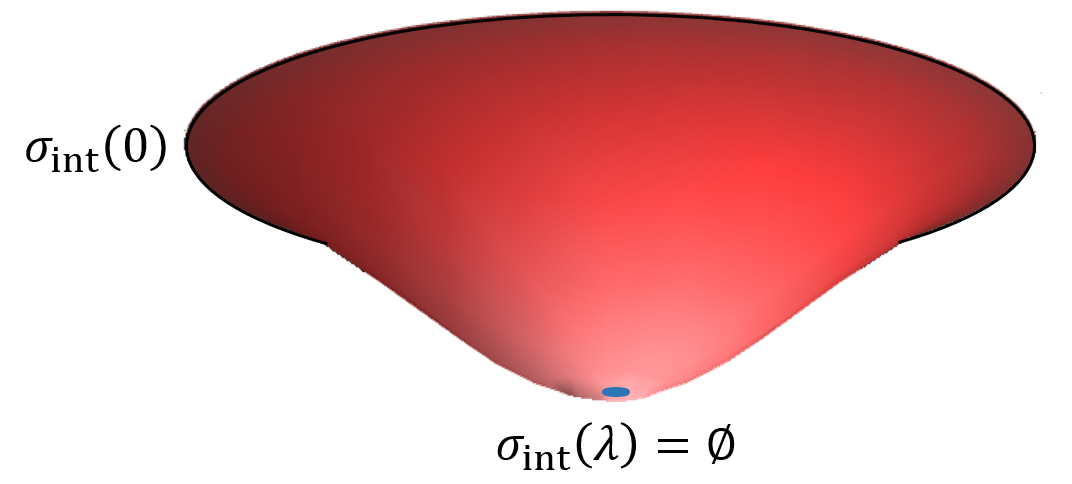}
  \label{fig:slice1}}

  \subfloat[]{%
  \includegraphics[clip,width=0.8\columnwidth]{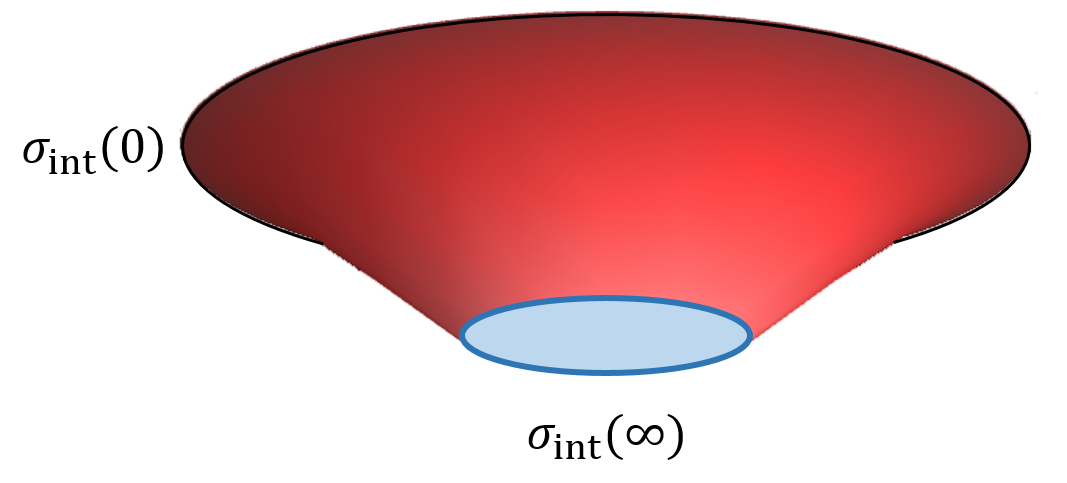}%
  \label{fig:slice2}}

  \subfloat[]{%
  \includegraphics[clip,width=0.8\columnwidth]{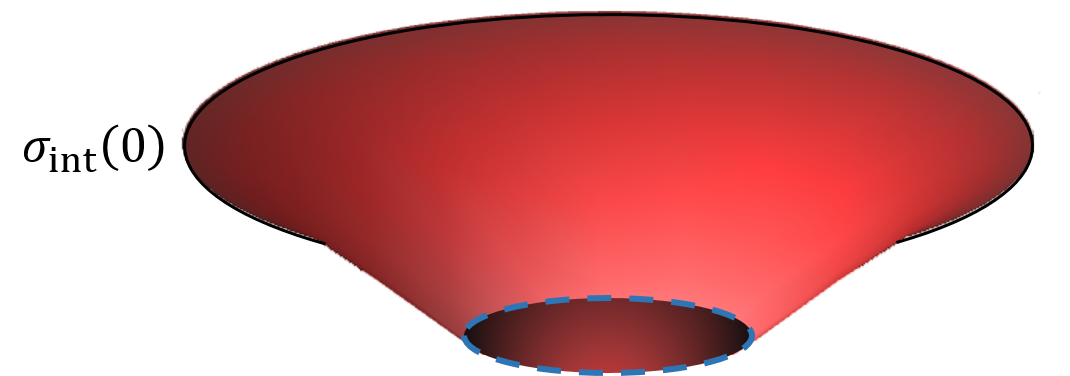}%
  \label{fig:slice3}}
\caption{Three possible ways in which the holographic slice can end.}
\label{fig:ending}
\end{figure}

The holographic slice can end in one of the three possible ways:
\begin{itemize}
    \item[(i)] The slice ends at an empty surface. This can occur simply such that $\sigma_{\text{int}}(\lambda)$ keeps moving inward, and the slice is capped off at a point, as shown in Fig.~\ref{fig:slice1}.
    Alternatively, as discussed in Sec.~\ref{subsubsec:excl}, disconnected components of $\sigma_{\text{int}}(\lambda)$ may appear during the flow, which then grow outward and coalesce with the original component of $\sigma_{\text{int}}(\lambda)$ moving inward, ending up with an empty surface.
    
    \item[(ii)] The slice asymptotes to a QES as shown in Fig.~\ref{fig:slice2}.
    As the interior portion $\sigma_{\text{int}}(\lambda)$ of renormalized leaves approaches the QES, the flow slows down because $\Theta_k, \Theta_l \rightarrow 0$.
    Note that a QES homologous to the initial leaf portion $\sigma_{\text{int}}(0)$---even if it is non-minimal---acts as a barrier which cannot be crossed as we flow in.

    \item[(iii)] The slice terminates abruptly as shown in Fig.~\ref{fig:slice3}.
    This occurs when the minimal QES associated with $\overline{A}$, the complement of an infinitesimal subregion $A$, becomes non-infinitesimal.
    At this point, we need to terminate the flow.
\end{itemize}
It is worth mentioning that while some of these cases have classical analogues, the second possibility of (i) and the case (iii) are exclusive to the quantum flow.

\subsection{Example}

Various examples for the classical flow equation of Eq.~(\ref{eq:classicalflow}) were discussed in Ref.~\cite{Nomura:2018kji}.
In many situations, the minimal QES is a small perturbation to the classical HRT surface, and accordingly the quantum corrected flow equation in Eq.~(\ref{eq:quantumflow}) results in a holographic slice that is perturbatively close to the classical holographic slice.
There are, however, cases in which the two flows are significantly different.
Here we illustrate an example of these:\ a black hole formed from collapse.

In the classical case, it was found that the holographic slice stays close to the horizon for a long time until it reaches the matter forming the black hole~\cite{Nomura:2018kji}.
It is then capped off to form a complete Cauchy slice of the spacetime as seen in Figs.~5 and 6 of Ref.~\cite{Nomura:2018kji}.
How is this modified at the quantum level?

Far from the black hole horizon, the flow is largely unaffected.
It is, however, significantly modified once we approach the horizon.
As the black hole evaporates, there are Hawking modes that escape to the region exterior to $\sigma_{\text{ext}}$, denoted $R$, leaving behind their interior partners entangled with them.
As the leaf portion $\sigma_{\text{int}}$ is moved inward by the flow, its classical area decreases but the entropy contribution from the Hawking partners increases.
About a Planck distance inside the horizon, the two effects compete with each other, resulting in a QES where the flow ends.
The mechanism by which the QES emerges here is identical to the one that appeared in a specific example in Ref.~\cite{Hartman:2020swn}.%
\footnote{Recently, Ref.~\cite{Hartman:2020khs} appeared which found such a non-trivial QES in cosmological spacetimes.
 The coarse-graining flow would end at the QES in these situations as well.}
Thus, after including bulk quantum corrections, the holographic slice becomes a partial Cauchy slice of the spacetime that excludes a large portion of the interior.%
\footnote{This does not necessarily mean that the interior of the black hole is absent. It is possible that the semiclassical interior picture emerges through approximately state-independent operators acting on modes (the hard modes~\cite{Nomura:2018kia,Nomura:2019qps}) whose characteristic frequencies are larger than the local Hawking temperature~\cite{Nomura:2019dlz}.}
The same feature can be found in the case of an AdS black hole, where one could allow the black hole to evaporate by coupling the CFT to a bath.
Our coarse-graining procedure then leads to a flow that stops at the same QES as that found in Refs.~\cite{Penington:2019npb,Almheiri:2019psf}.

\begin{figure}
    \centering
    \includegraphics[clip, width = 0.65\columnwidth]{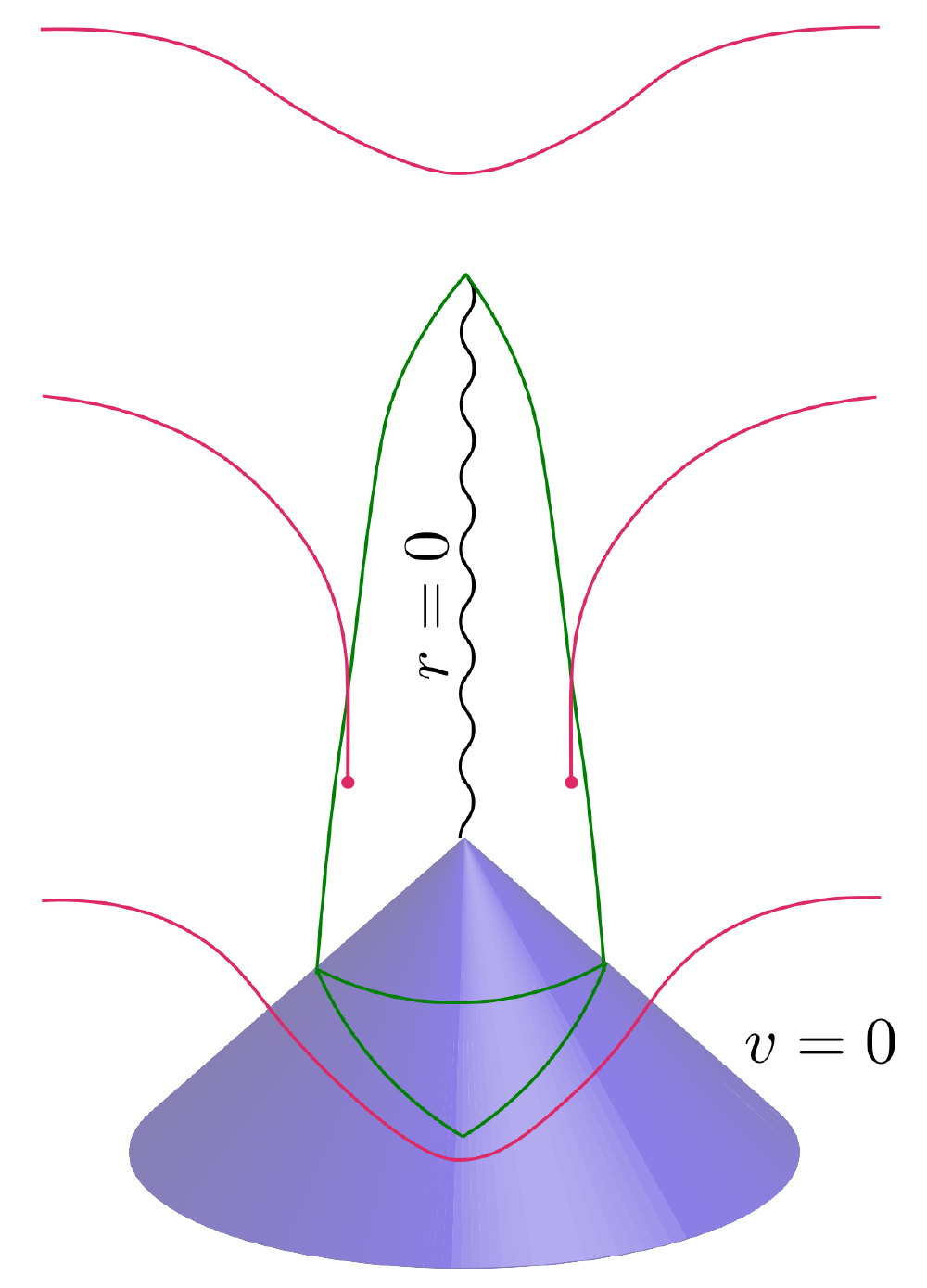}
    \caption{Eddington-Finkelstein diagram representing black hole formation and evaporation with quantum holographic slices depicted for three characteristic times.}
    \label{fig:my_label}
\end{figure}
\begin{figure}
    \centering
    \includegraphics[clip, width = 0.55\columnwidth]{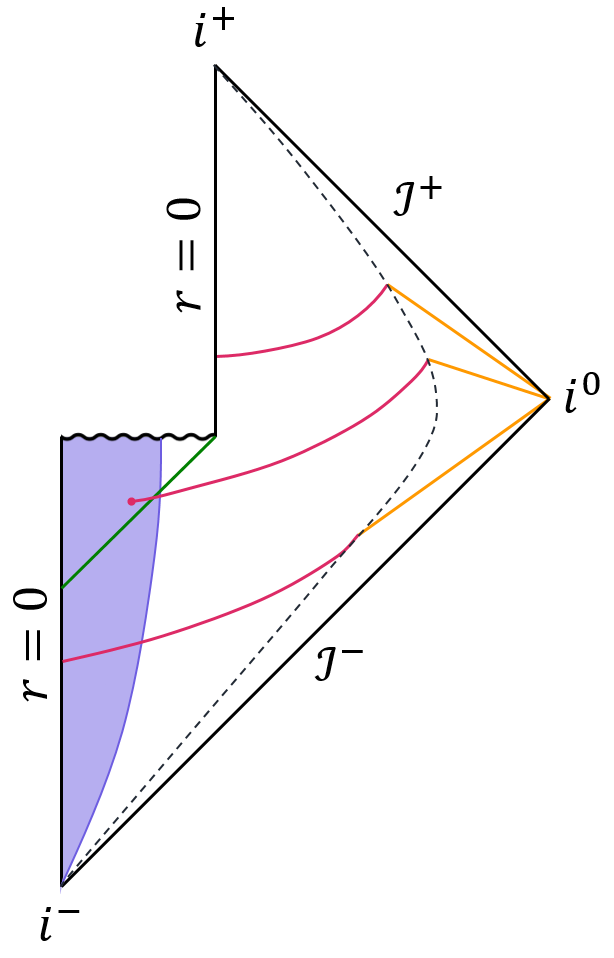}
    \caption{Penrose diagram version of Fig.~\ref{fig:my_label}.
     The region $R$ outside $\sigma_{\text{ext}}$ is depicted by orange lines.}
    \label{fig:my_label2}
\end{figure}

Another mechanism excising the black hole interior comes from the phenomenon discussed in Sec.~\ref{subsubsec:excl}.
As the black hole evaporates, there are a large number of interior partners of Hawking radiation that accumulate behind the horizon, which eventually exceed the area of the horizon at the Page time.
Hence the interior of such an old black hole would not be swept by renormalized leaves (even if the flow did not halt as described above).

The phenomenon that the holographic slice does not penetrate deep into the black hole horizon was already seen in the classical case.
There is, however, an important difference in the quantum case.
As shown in Figs.~\ref{fig:my_label} and \ref{fig:my_label2}, 
holographic slices become partial Cauchy slices during the middle of the evolution of a black hole.
(These can be contrasted with Figs.~5 and 6 of Ref.~\cite{Nomura:2018kji}.)
This implies, in particular, that with a given time parameterization on a boundary, e.g.\ on the holographic screen, the concept of black hole formation and evaporation can be rigorously defined through the behavior of the flow discussed in Sec.~\ref{subsec:end}.

%-----------------------------------------------------------------------

\section{Relation to Quantum Error Correction}
\label{sec:QEC}

In this section, we discuss the relation between our coarse-graining procedure and the picture that the holographic dictionary works as quantum error correction~\cite{Almheiri:2014lwa,Harlow:2016vwg,Akers:2018fow,Dong:2018seb}, in which a small Hilbert space of semiclassical bulk states is mapped isometrically into a larger boundary Hilbert space.
In our framework, this picture arises after considering a collection of states over which we want to build a low energy bulk description.

In the context of quantum error correction, one chooses the set of semiclassical bulk states that can be represented as a code subspace in the boundary theory.
In a general time-dependent spacetime, however, there is no natural choice of code subspace fixed by the bulk effective theory.
This is because degrees of freedom that appear natural on one time slice need not be in bijection with those that appear natural on a different time slice.
For example, if a single heavy particle decays into a large number of radiation particles within the causal domain of $\sigma_{\text{int}}$, then we may naturally choose a code subspace associated with the degrees of freedom of the parent particle, e.g.\ its spin, or a larger subspace determined by the coarse-grained entropy of the final state radiation.

\begin{figure}[t]
  \includegraphics[clip,width=0.8\columnwidth]{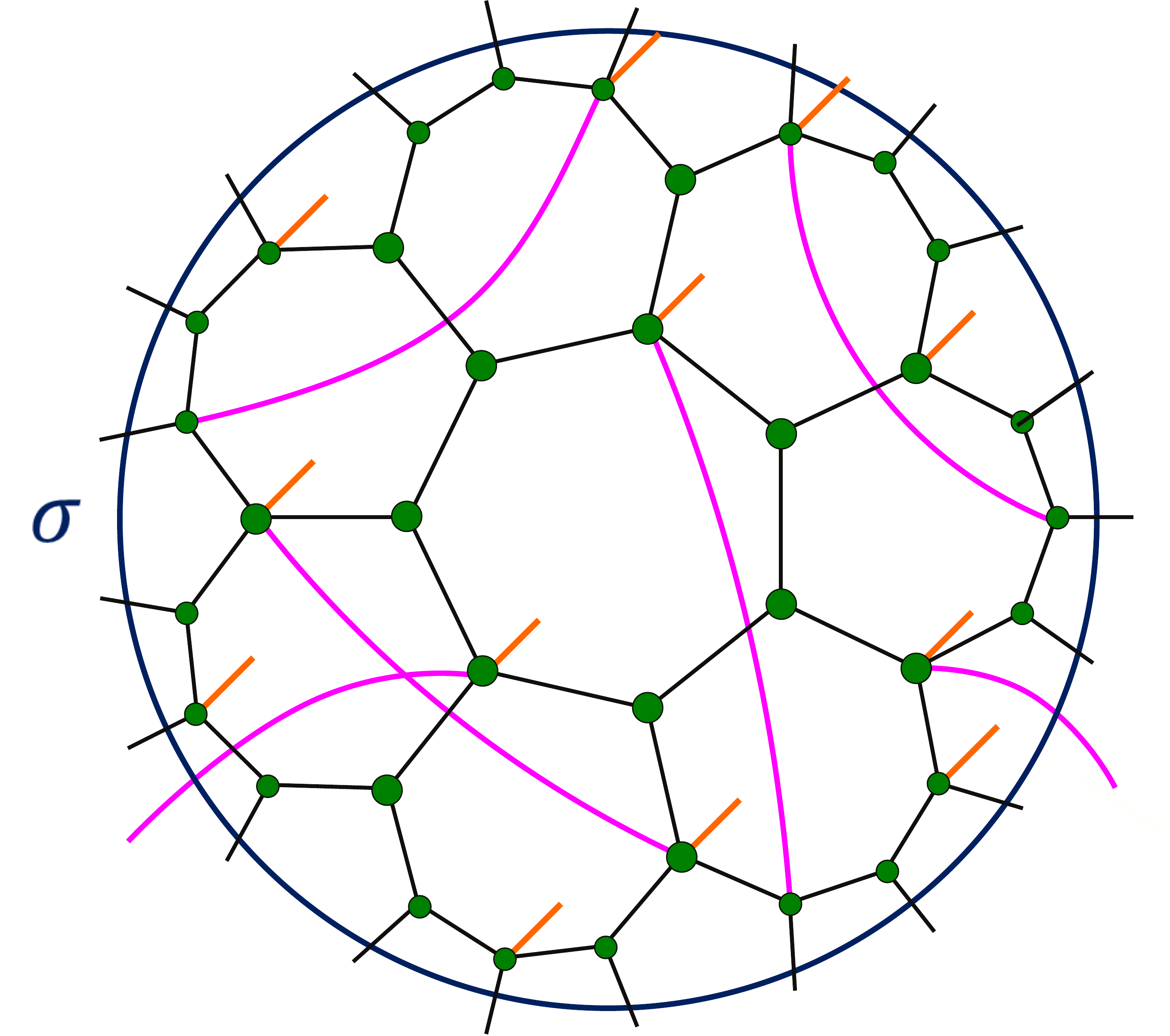}
\caption{A TN representing a collection of states has dangling legs as well as nonuniversal tensors and nonlocal bonds. }
\label{fig:dang}
\end{figure}

Our framework addresses this issue by providing a specific gauge choice given by the coarse-graining procedure.
Suppose there are a set of states giving similar geometries on their holographic slices.
Then, we can represent all these states approximately at once by a single TN, which has ``dangling'' legs so that different states in these legs correspond to different elements in the set; see Fig.~\ref{fig:dang}.
This is a choice of code subspace motivated by the coarse-graining procedure.

The introduction of dangling legs amounts to dividing bulk degrees of freedom into two classes:\ those represented by a code subspace and the rest.
Let us denote the associated Hilbert space factors by $\mathcal{H}_{\text{code}}$ and $\mathcal{H}_{\text{frozen}}$, respectively.
States in $\mathcal{H}_{\text{code}}$ correspond to the bulk degrees of freedom that we aim to reconstruct, while $\mathcal{H}_{\text{frozen}}$ is viewed as ``frozen.''
Namely, the degrees of freedom corresponding to $\mathcal{H}_{\text{frozen}}$ are treated essentially as part of the background, despite the fact that they are associated with quantum states in the conventional bulk effective field theory.

We can now define the coarse-graining in this setup, namely on a continuum analogue of a TN with dangling legs.
Specifically, we take the maximally mixed state in $\mathcal{H}_{\text{code}}$, while picking a fixed state in $\mathcal{H}_{\text{frozen}}$ determined by the network structure, i.e.\ the background geometry.
This can be thought of as considering a coarse-grained version of a generic state within the code subspace.
Indeed, the maximally mixed state plays a crucial role in AdS/CFT, in which reconstruction of an operator in the maximally mixed state is sufficient for the operator to be reconstructed on arbitrary states in the code subspace~\cite{Hayden:2018khn,Akers:2019wxj}.

\begin{figure}[t]
\centering
  \includegraphics[clip,width=0.7\columnwidth]{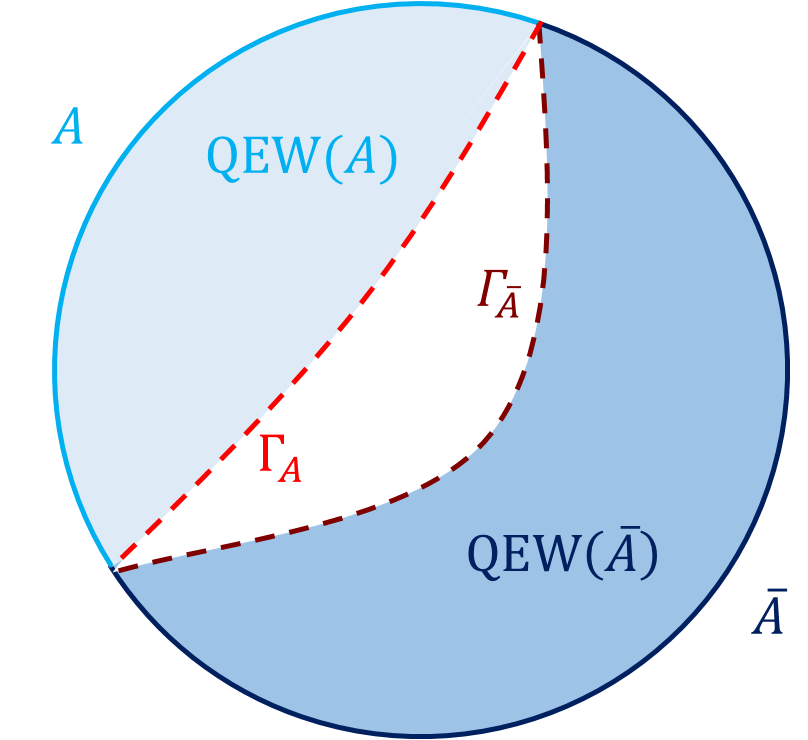}
\caption{With the state in $\mathcal{H}_{\text{code}}$ being maximally mixed, the QESs $\Gamma_A$ and $\Gamma_{\overline{A}}$ can differ.}
\label{fig:recwedg}
\end{figure}

The coarse-graining procedure then follows that in Secs.~\ref{sec:motive} and \ref{sec:main}, but this time $S_{\text{bulk}}$ receives contributions both from dangling legs, $S_{\text{code}}$, and nonlocal legs, $S_{\text{frozen}}$.
Since the QESs for $A$ and $\overline{A}$ need not agree (see Fig.~\ref{fig:recwedg}), the region between the two surfaces---often termed the no-man's land---is partially entangled with both $A$ and $\overline{A}$.
Nevertheless, using Eq.~(\ref{eq:intersectquantum}) we can obtain a picture analogous to that in Secs.~\ref{sec:motive} and \ref{sec:main}.

Once we perform the flow to obtain a renormalized leaf $\sigma(\lambda)$ appropriate to deal with the problem, e.g.\ by making $\sigma_{\text{int}}(\lambda)$ a surface surrounding the region we are interested in, then we can consider the set of all states in $\mathcal{H}_{\text{code}}$, rather than the maximally mixed state, to analyze the system in more detail.
As in the corresponding TN case, we can then interpret such coarse-grained states in two ways.

One way is to view a state in the set as defining an entangled state in the combined bulk-boundary Hilbert space
\begin{align}\label{eq:stateTN}
    \ket{\psi} \in \mathcal{H}_{\text{bulk}} \otimes \mathcal{H}_{\text{boundary}},
\end{align}
where $\mathcal{H}_{\text{bulk}}$ represents the space of bulk states in the code subspace.

Another way is to regard the set as giving an isomorphic map between the bulk Hilbert space (i.e.\ the space of dangling legs) and a subspace of the much larger boundary Hilbert space:
\begin{align}\label{eq:map}
    \{ \ket{\psi} \} &: \mathcal{H}_{\text{bulk}}\leftrightarrow \mathcal{H}_{\text{code}}\subset \mathcal{H}_{\text{boundary}}.
\end{align}
Note that in the TN picture, $\mathcal{H}_{\text{boundary}}$ consists of both local and nonlocal bonds cut by the boundary surface obtained by the flow, as well as the part associated with $\sigma_{\text{ext}}$.
This implies that the dimension of the boundary effective Hilbert space is given by
\begin{align}
    \ln|\mathcal{H}_{\text{boundary}}(\sigma(\lambda))| &= \frac{\mathcal{A}(\sigma(\lambda))}{4G_N} + S_{\text{frozen}}(\Sigma(\lambda)).
\end{align}
This can be compared with Eq.~(\ref{eq:dim-H_eff}).

In this way, holographic properties such as the HRT formula and entanglement wedge reconstruction can be naturally interpreted~\cite{Harlow:2016vwg,Cotler:2017erl}.
The interpretation is consistent with the analysis in Ref.~\cite{Hayden:2018khn} that the region reconstructable by state-independent operators---termed the reconstruction wedge in Ref.~\cite{Akers:2019wxj}---can be computed by considering $\text{QEW}(A)$ of the maximally mixed state in $\mathcal{H}_{\text{code}}$.
In this picture, our coarse-graining procedure is interpreted to produce a sequence of holographic encoding maps parameterized by $\lambda$, each of which can be viewed as a holographic duality of the form in Eq.~(\ref{eq:map}).

%-----------------------------------------------------------------------

\section{Conclusions}
\label{sec:concl}

In this paper, we have generalized the holographic coarse-graining procedure described in Ref.~\cite{Nomura:2018kji} to include bulk quantum corrections.
Interestingly, the generalization involves promoting classical expansions $\theta$ to quantum expansions $\Theta$ as has been found in many other examples~\cite{Bousso:2015eda,Bousso:2015mna,Wall:2018ydq,Bousso:2019dxk,Bousso:2019var}.
We have demonstrated that the flow equation obtained in the bulk has all the properties consistent with an interpretation as a coarse-graining process in the holographic theory.
Our procedure also gives a way in which the region exterior to the holographic screen is treated at the quantum level.
It would be interesting to explicitly understand the detailed coarse-graining procedure from a boundary theory perspective.

\acknowledgments

We thank Nico Salzetta and Arvin Shahbazi-Moghaddam for discussions. 
This work was supported in part by the Department of Energy, Office of Science, Office of High Energy Physics under contract DE-AC02-05CH11231 and award DE-SC0019380 and in part by World Premier International Research Center Initiative (WPI Initiative), MEXT, Japan.

\appendix

\section{Convexity of Renormalized Leaves}
\label{app:QES}

\begin{definition}
\label{def:cvxset}
On a spacelike slice $\Sigma$, a compact set $S$ is defined to be convex if the quantum minimal surface $\chi_A$ anchored to the boundary $\partial A$ of a codimension-2 region $A \subset S$ is such that $\forall A$, $\chi_A \subset S$.
Here, the quantum minimal surface $\chi_A$ is defined as the surface on $\Sigma$ which minimizes the generalized entropy $S_\mathrm{gen}$ for the region on $\Sigma$ bounded by $\chi_A \cup A$.
\end{definition}

\begin{definition}
\label{def:cvxbdry}
A codimension-2 compact surface $\sigma$ is called a convex boundary if on every codimension-1 spacelike slice $\Sigma$ such that $\sigma \subset \Sigma$, the closure of the interior of $\sigma$ is a convex set.
\end{definition}

\begin{theorem}
\label{thm:cvx}
If $\sigma$ is a convex boundary, then for any subregion $A \subset \sigma$, $A \cup \Gamma_A$ is also a convex boundary.
\end{theorem}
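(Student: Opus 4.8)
The plan is to verify Definition~\ref{def:cvxbdry} directly for the candidate surface $\sigma' = A \cup \Gamma_A$. Concretely, I fix an arbitrary spacelike slice $\Sigma'$ with $\sigma' \subset \Sigma'$, let $S'$ be the closure of the interior of $\sigma'$ on $\Sigma'$, and must show that $S'$ is convex in the sense of Definition~\ref{def:cvxset}: for every codimension-2 region $B \subset S'$, the quantum minimal surface $\chi_B$ that minimizes $S_\mathrm{gen}$ on $\Sigma'$ satisfies $\chi_B \subset S'$. The strategy is to import the convexity of the original leaf $\sigma$ and combine it with the fact that $\Gamma_A$, being a minimal quantum extremal surface, acts as a barrier to quantum minimal surfaces.

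The central tool I would establish first is this barrier property, using the quantum maximin construction of Ref.~\cite{Akers:2019lzs}. Because $\Gamma_A$ is a quantum extremal surface homologous to $A$, the quantum version of the extremal-surface barrier theorem forbids any quantum minimal surface homologous to a region on the $A$-side from being deformed across $\Gamma_A$. The operational consequence I need is that enlarging $\Sigma'$ across $\Gamma_A$ does not alter the quantum minimal surface of a region lying on the $A$-side. I would exploit this by gluing to $\Sigma'$, along $\Gamma_A$, a Cauchy slice of the complementary wedge $\text{QEW}(\overline{A})$, producing a full slice $\tilde\Sigma$ that now contains the entire original leaf $\sigma = A \cup \overline{A}$. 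Entanglement wedge nesting, Eq.~\eqref{eq:nesting}, guarantees that these homology regions fit together consistently.

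With $\tilde\Sigma$ in hand, the convexity of $\sigma$ applies on $\tilde\Sigma$ by hypothesis. For any $B \subset S' \subset \text{int}_{\tilde\Sigma}(\sigma)$, convexity gives $\chi_B^{\tilde\Sigma} \subset \text{int}_{\tilde\Sigma}(\sigma)$, while the barrier property of $\Gamma_A$ both pins $\chi_B^{\tilde\Sigma}$ to the $A$-side and forces $\chi_B^{\Sigma'} = \chi_B^{\tilde\Sigma}$. Since the $A$-side portion of $\text{int}_{\tilde\Sigma}(\sigma)$ is exactly $S'$, this yields $\chi_B \subset S'$, which is the desired convexity of $\sigma'$. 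The one case not covered by this reduction is a region $B$ anchored to the $\Gamma_A$ portion of $\partial S'$ rather than to $A$; there I would invoke that an extremal surface is itself locally a convex boundary, so that the barrier property of $\Gamma_A$ again confines $\chi_B$ without any appeal to convexity of $\sigma$.

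The hard part will be carrying the barrier and slice-gluing arguments through at the quantum level, where $S_\mathrm{gen}$ mixes the area with the bulk von~Neumann entropy. In contrast to the purely geometric classical case of Ref.~\cite{Nomura:2018kji}, splitting or recombining homology regions across $\Gamma_A$ now requires controlling $S_\mathrm{bulk}$ through strong subadditivity, and one must verify that the quantum maximin surface stays minimal as the slice is deformed---this is precisely where the quantum focusing input behind Ref.~\cite{Akers:2019lzs} does the real work. A secondary subtlety is the junction at $\partial A = \partial \Gamma_A$, where $\sigma'$ need not be smooth; there I would argue that a minimal surface approaching the corner is pushed inward, using convexity of $\sigma$ at $\partial A$ together with the extremality of $\Gamma_A$.
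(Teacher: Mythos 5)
Your proposal has a genuine gap: the ``barrier property'' you plan to establish first is not an off-the-shelf consequence of the results you cite---it \emph{is} the theorem. The quantum barrier theorems in the literature, and the machinery behind quantum maximin in Ref.~\cite{Akers:2019lzs}, constrain where quantum \emph{extremal} surfaces can lie and lean on the quantum focusing conjecture; they do not say that the quantum \emph{minimal} surface $\chi_B$ on an arbitrary slice $\Sigma'$ through $\Gamma_A$ cannot cross $\Gamma_A$. That no-crossing statement is exactly the crux, and the paper proves it directly by a cut-and-paste argument on the given slice: if $\chi_B$ crossed $\Gamma_A$, write $\Gamma_A = X_1 \cup X_2$ and $\chi_B = X_3 \cup X_4$; minimality of $\Gamma_A$ against the competitor $X_2 \cup X_3$ gives Eq.~\eqref{eq:mincon1}, strict minimality of $\chi_B$ against $X_1 \cup X_4$ gives Eq.~\eqref{eq:mincon2}, and adding the two cancels every area term, leaving $S_{\mathrm{bulk}}(ab) + S_{\mathrm{bulk}}(bc) < S_{\mathrm{bulk}}(b) + S_{\mathrm{bulk}}(abc)$, in contradiction with strong subadditivity. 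You correctly sense that strong subadditivity is ``where the real work'' happens, but you never perform that step; deferring it to ``the quantum focusing input behind Ref.~\cite{Akers:2019lzs}'' both assumes the crux and replaces the weak input the paper actually needs (strong subadditivity of $S_{\mathrm{bulk}}$) with a far stronger conjecture that is not required.

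The slice-gluing reduction is also on shakier ground than you indicate, and---once the no-crossing property is available---it is unnecessary. To your credit, it responds to a real subtlety (Definition~\ref{def:cvxbdry} for $\sigma$ quantifies only over slices containing $\sigma$, while $\Sigma'$ is merely required to contain $A \cup \Gamma_A$), which the paper's own proof treats implicitly. But your construction needs (i) $\Gamma_{\overline{A}} = \Gamma_A$, i.e.\ complementary recovery, which should at least be stated as an assumption on the global state, and (ii) achronality of the glued surface $\tilde\Sigma$, which can fail for a generic $\Sigma'$ whose interior portion $S'$ is not spacelike separated from the complementary wedge. Entanglement wedge nesting, Eq.~\eqref{eq:nesting}, is an inclusion statement for wedges of nested boundary regions; it does nothing to guarantee that two partial slices fit together into an achronal surface, so the sentence invoking it does not carry the weight you place on it. The paper's proof of Theorem~\ref{thm:cvx} avoids all of this by never leaving the slice on which $B$ is given: crossing $A$ is excluded by convexity of $\sigma$, crossing $\Gamma_A$ is excluded by the swap-plus-strong-subadditivity contradiction, and no auxiliary slice is ever built. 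If you want to rescue your outline, prove the no-crossing lemma on $\Sigma'$ by that swap argument; your gluing and nesting steps can then simply be deleted.
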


\begin{proof}
Suppose that for some spacelike slice $\Sigma$, which contains $A \cup \Gamma_A$,  there is $B$ in the closure of the interior of $A \cup \Gamma_A$ such that the unique quantum minimal surface $\chi_B$ goes outside $A \cup \Gamma_A$. 

Since $\sigma$ is assumed to be a convex boundary, $\chi_B$ cannot go outside $\sigma$, i.e. it cannot cross $A$. 
Thus, $\chi_B$ must cross $\Gamma_A$ as shown in Fig.~\ref{fig:cvx}. 

\begin{figure}[t]
    \includegraphics[clip,width=0.6\columnwidth]{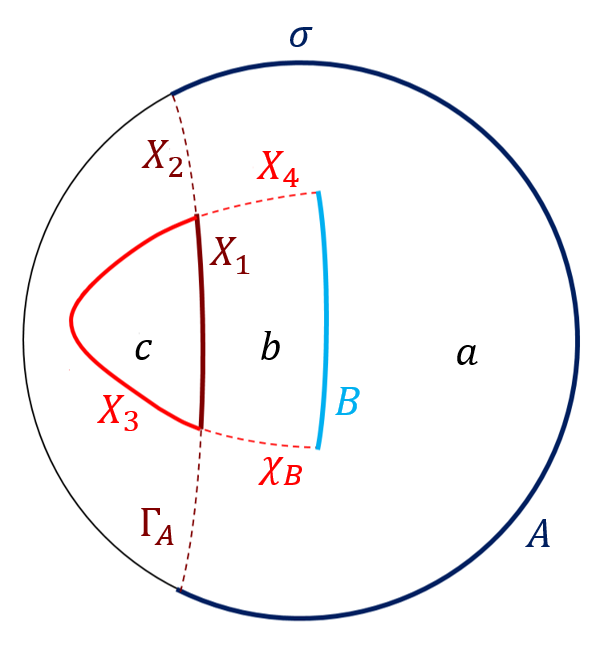}
    \caption{The quantum minimal surface $\chi_B$ crossing the quantum minimal surface $\Gamma_A$.}
    \label{fig:cvx}
\end{figure}

As $\Gamma_A = X_1 \cup X_2$ is the quantum minimal surface for $A$,
\begin{multline}
\label{eq:mincon1}
    \frac{\mathcal{A}(X_1)}{4G} +\frac{\mathcal{A}(X_2)}{4G} +
    \frac{\mathcal{A}(A)}{4G} + S_{\mathrm{bulk}}(ab) \leq \\
    \frac{\mathcal{A}(X_2)}{4G} + \frac{\mathcal{A}(X_3)}{4G} +
    \frac{\mathcal{A}(A)}{4G} + S_{\mathrm{bulk}}(abc),
\end{multline}
where the right-hand side corresponds to the surface $X_A = X_2 \cup X_3$.

Also, $\chi_B = X_3 \cup X_4$ is the unique quantum minimal surface for $B$, so 
\begin{multline}\label{eq:mincon2}
    \frac{\mathcal{A}(X_3)}{4G} +\frac{\mathcal{A}(X_4)}{4G} +
    \frac{\mathcal{A}(B)}{4G} +
    S_{\mathrm{bulk}}(bc) < \\
    \frac{\mathcal{A}(X_1)}{4G} + \frac{\mathcal{A}(X_4)}{4G} +
    \frac{\mathcal{A}(B)}{4G}+ S_{\mathrm{bulk}}(b),
\end{multline} 
where the right-hand side corresponds to the surface $X_B = X_1 \cup X_4$.

Combining Eqs. (\ref{eq:mincon1}) and (\ref{eq:mincon2}), we have
\begin{equation}
    S_{\mathrm{bulk}}(ab) + S_{\mathrm{bulk}}(bc) < S_{\mathrm{bulk}}(b) + S_{\mathrm{bulk}}(abc),
\end{equation}
which contradicts strong subadditivity. 
Thus, the quantum minimal surface $\chi_B$ crossing the quantum minimal surface $\Gamma_A$ results in a contradiction. 
\end{proof}

By repeatedly applying this theorem, we can show that if the initial leaf has the interior portion $\sigma_{\text{int}}$ that is convex, then a sequential quantum flow procedure results in a convex interior portion at each step.
In the continuum limit, this sequential procedure gives us the same renormalized leaves that were obtained using the flow equation.
Thus, we conclude that if the initial leaf portion $\sigma_{\text{int}}(0)$ is a convex boundary, then the renormalized leaf portion $\sigma_{\text{int}}(\lambda)$ is also a convex boundary for all $\lambda > 0$.

\begin{lemma}
\label{lem:cvx}
Consider a slice $\Sigma$ and a compact set $S \subset \Sigma$.
If $S$ is convex then $\Theta_\Sigma(\partial S) \leq 0$.
Here, $\Theta_\Sigma(\partial S)$ is the trace of the quantum extrinsic curvature of $\partial S$ embedded in $\Sigma$ for the normal pointing inward.
\end{lemma}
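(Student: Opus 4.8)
The plan is to argue by contradiction, converting the global convexity hypothesis (phrased via quantum minimal surfaces in Definition~\ref{def:cvxset}) into the local differential statement about $\Theta_\Sigma(\partial S)$ through a first-variation computation. This is the quantum analogue of the classical fact that a region whose minimal surfaces stay inside has non-positive inward mean curvature on its boundary. Suppose, contrary to the claim, that $\Theta_\Sigma(\partial S) > 0$ at some point $p \in \partial S$. Since $\partial S$ and the relevant bulk data are assumed smooth, $\Theta_\Sigma > 0$ holds on a small neighborhood $U \subset \partial S$ of $p$.

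First I would isolate a test region. Let $D \subset U$ be a small codimension-2 disk centered at $p$ and set $A = \partial S \setminus D$, which is a valid codimension-2 region contained in $S$, so convexity applies to it: the quantum minimal surface $\chi_A$ anchored to $\partial A = \partial D$ must satisfy $\chi_A \subset S$. The natural ``inside'' competitor anchored to $\partial D$ is $D$ itself, for which $A \cup D = \partial S$ bounds the full interior of $S$; I would compare $S_{\text{gen}}$ of this configuration against nearby deformations of $D$.

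The crux is the first variation. By the very definition of $\Theta_\Sigma(\partial S)$ as the trace of the quantum extrinsic curvature---i.e.\ the functional derivative of $S_{\text{gen}}$ of the enclosed region under normal deformations---a deformation of $D$ by $f\,\nu_{\text{in}}$, with $\nu_{\text{in}}$ the inward normal, changes the generalized entropy by $\delta S_{\text{gen}} = \frac{1}{4 G_N}\int_D f\, \Theta_\Sigma\, \sqrt{h}\, d^{d-2}y + O(f^2)$, where the area and bulk-entropy contributions are both packaged into $\Theta_\Sigma$. Since $\Theta_\Sigma > 0$ on $D$, an inward indentation ($f>0$, remaining inside $S$) strictly increases $S_{\text{gen}}$, whereas pushing $D$ outward ($f<0$, exiting $S$) strictly decreases it. Thus there is a surface anchored to $\partial D$ lying partly outside $S$ with $S_{\text{gen}}$ strictly below that of $D$, while every small inward indentation has strictly larger $S_{\text{gen}}$ than $D$. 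Taking $D$ small, standard existence and regularity of the quantum minimal surface force $\chi_A$ into a neighborhood of $p$ where $\Theta_\Sigma > 0$ and where the first-order term controls the sign; hence the minimizer can be neither $D$ nor any inward indentation, and must bulge outward to exit $S$. This contradicts $\chi_A \subset S$, so $\Theta_\Sigma(\partial S) \leq 0$ everywhere.

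The main obstacle will be the quantum (bulk-entropy) part of the argument: unlike the pure-area case, $S_{\text{bulk}}$ is nonlocal, so I must ensure that the sign encoded in $\Theta_\Sigma$ genuinely dominates over finite---not merely infinitesimal---deformations, and that $\chi_A$ does not wander far from $D$ as $D$ shrinks. Both are controlled by the smoothness assumptions on the state together with the standing assumption that $S_{\text{bulk}}$ varies with a bounded density on the scale of $D$, so that the positive lower bound on $\Theta_\Sigma$ over $U$ dominates the $O(f^2)$ terms; making this estimate quantitative is the only delicate point.
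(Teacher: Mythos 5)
Your proof is correct and takes essentially the same route as the paper: the paper's own proof is exactly this contradiction argument, supposing $\Theta_\Sigma(\partial S) > 0$ somewhere on $\partial S$ and constructing, from small enough subregions anchored to that portion of $\partial S$, a quantum minimal surface that exits $S$. Your write-up simply fleshes out the paper's two-sentence sketch---the choice $A = \partial S \setminus D$ so that the first variation of the minimized functional is precisely $\Theta_\Sigma$, and the caveat that $\chi_A$ must stay near $D$ as $D$ shrinks, which the paper leaves implicit.
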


\begin{proof}
Suppose that $\Theta_\Sigma(\partial S) > 0$ somewhere on $\partial S$.
One can explicitly construct minimal surfaces that are outside $S$ by considering small enough subregions anchored to this portion of $\partial S$.
\end{proof}

Let the future-directed null vectors orthogonal to a codimension-2 spacelike surface $\sigma$ be $k$ and $l$, which we normalize as $k\cdot l = -2$.

\begin{theorem}
\label{thm:cvxbdry}
If $\sigma$ is a convex boundary, then the quantum expansions in the inward direction $\Theta_k$ and $\Theta_{-l}$ are both non-positive.
\end{theorem}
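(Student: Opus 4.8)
The plan is to reduce the two null inequalities to a one-parameter family of spatial inequalities—one for each spacelike slice through $\sigma$—and then invoke Lemma~\ref{lem:cvx}. Since $\sigma$ is codimension-2, its normal bundle is two-dimensional, spanned by the future-directed timelike unit normal $u$ of any chosen slice $\Sigma$ together with the outward spacelike unit normal $n$ of $\sigma$ within $\Sigma$. I would fix the orientation so that $k = u - n$ and $l = u + n$, which gives $k \cdot l = -2$ and correctly identifies $k$ (future-inward) and $-l$ (past-inward) as the two inward null directions named in the statement.

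First I would note that a general spacelike slice $\Sigma_\beta$ containing $\sigma$ is obtained by boosting the normal frame by a rapidity $\beta$: its future timelike normal is $u_\beta = \cosh\beta\, u + \sinh\beta\, n$ and its outward spatial normal is $n_\beta = \sinh\beta\, u + \cosh\beta\, n$, so the inward spatial normal within $\Sigma_\beta$ is $-n_\beta$. Because $\sigma$ is a convex boundary (Definition~\ref{def:cvxbdry}), the closure of its interior is convex on every such $\Sigma_\beta$, and Lemma~\ref{lem:cvx} then gives $\Theta_{\Sigma_\beta}(\sigma) = \Theta_{-n_\beta} \le 0$ for all $\beta \in \mathbb{R}$.

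The crucial input is that the quantum expansion $\Theta_v$ is \emph{linear} in the normal deformation vector $v$. For the area piece this is immediate, since $\theta_v = h^{ab} K^{(v)}_{ab}$ contracts the vector-valued second fundamental form linearly against $v$. For the bulk piece, the first variation of $S_\mathrm{bulk}$ under a smooth deformation of the homology region's boundary is a linear functional of the deformation vector field. Granting this, $\Theta_{n_\beta} = \sinh\beta\, \Theta_u + \cosh\beta\, \Theta_n$ with $\Theta_u = \tfrac12(\Theta_k + \Theta_l)$ and $\Theta_n = \tfrac12(\Theta_l - \Theta_k)$, so a short computation yields
\begin{equation}
\Theta_{-n_\beta} = \frac{1}{2}\left( \Theta_k\, e^{-\beta} - \Theta_l\, e^{\beta} \right).
\end{equation}
Combined with the Lemma, $\Theta_k\, e^{-\beta} - \Theta_l\, e^{\beta} \le 0$ for all $\beta$; sending $\beta \to -\infty$ isolates the first term and forces $\Theta_k \le 0$, while $\beta \to +\infty$ isolates the second and forces $\Theta_l \ge 0$, i.e.\ $\Theta_{-l} \le 0$ by linearity.

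I expect the main obstacle to be justifying the linearity of the full quantum expansion—in particular, confirming that the trace of the quantum extrinsic curvature appearing in Lemma~\ref{lem:cvx} is exactly $\Theta_{-n_\beta}$ as defined by the variation of $S_\mathrm{gen}$, and that the $S_\mathrm{bulk}$ contribution assembles into the same $(u,n)$ linear combination as the area term under the boost. Once linearity in $v$ is granted, the boost decomposition and the two limits are routine.
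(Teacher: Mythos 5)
Your proof is correct and follows essentially the same route as the paper: the paper also decomposes the slice's inward normal as a non-negative combination $n = \alpha k - \beta l$, invokes Lemma~\ref{lem:cvx}, and takes the extreme regime $\alpha \gg \beta$ (your $\beta \to \mp\infty$ boost limits) to isolate $\Theta_k$ and $\Theta_l$, arguing by contradiction rather than by your direct limit. Your explicit flagging of the linearity of $\Theta$ in the deformation vector is a point the paper leaves implicit, but it is justified there by the same structure used in Appendix~\ref{app:flowprf}, where the first variation of $S_{\mathrm{bulk}}$ is encoded in a current $J_\mu$ contracted linearly against the deformation.
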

 
\begin{proof}
Consider some spacelike slice $\Sigma$. The inward normal $n$ on $\Sigma$ is given by
\begin{equation}
    n = \alpha k - \beta l
\end{equation}
with $\alpha, \beta \geq 0$.

Suppose $\Theta_k > 0$.
Now we can choose a slice $\Sigma$ such that $\Theta_\Sigma(\sigma) > 0$ by taking $\alpha \gg \beta$.
Thus, $\sigma$ is not a convex boundary because the closure of its interior is not convex on $\Sigma$ due to Lemma~\ref{lem:cvx}.
A similar argument holds if $\Theta_l < 0$.
\end{proof}

When we discuss the convexity of the leaf $\sigma(\lambda) = \sigma_{\mathrm{int}}(\lambda) \cup \sigma_{\mathrm{ext}}$, we treat $\sigma_{\mathrm{ext}}$ as a single unit on $\sigma$ which cannot be further divided into subregions.
Thus, $\sigma_{\mathrm{ext}}$ is included or excluded as a whole when we consider any boundary subregion $A$.

\section{Derivation of the Flow Equation}
\label{app:flowprf}

Consider a codimension-2, closed, achronal surface $\sigma$ in an arbitrary $(d + 1)$-dimensional spacetime $M$.
Suppose $\sigma$ is a convex boundary.
We assume that both $M$ and $\sigma$ are sufficiently smooth so that variations in the spacetime metric $g_{\mu\nu}$ and induced metric on $\sigma$, denoted by $h_{ij}$, occur on characteristic length scales $L_g$ and $L_{\sigma}$, respectively.
We also assume that the changes of the variational entropy current density $J^{\mu}(x)$, discussed below, occur on a characteristic length scale $L_S$.

\begin{theorem}
\label{thm:flweqn}
Consider subregion $A$ of characteristic length $\delta \ll L_g, L_\sigma, L_S$ on the surface $\sigma$.
This subregion $A$ is chosen to be a $(d-1)$-dimensional ellipsoid on $\sigma$ at order $O(\delta)$. 
Then, at the leading order, the QES anchored to $\partial A$ lives on the hypersurface generated by the evolution vector%
\footnote{In this appendix, we ignore the possibility that there is no QES infinitesimally close to $A$ as $\delta \rightarrow 0$, i.e.\ the possibility (iii) discussed in Sec.~\ref{subsec:end}.}
\begin{equation}
   s = \frac{1}{2} (\Theta_k l + \Theta_l k) =  \Theta_t t -\Theta_z z
\end{equation}
normal to $\sigma$.
Here, $k$ and $l$ are future-directed null vectors orthogonal to $\sigma$ normalized as $k\cdot l = -2$, and $t$ and $z$ are vectors related to these by
\begin{equation}
\label{eq:defkl}
   k = (t+z), \hspace{15pt} l = (t-z). 
\end{equation}
\end{theorem}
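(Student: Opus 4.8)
The goal is to show that the quantum extremal surface anchored to the boundary $\partial A$ of an infinitesimal ellipsoidal region $A$ of size $\delta$ dips into the bulk along the normal direction $s = \frac{1}{2}(\Theta_k l + \Theta_l k)$. My plan is to work perturbatively in $\delta$ and parametrize the candidate QES as a small normal displacement field off $\sigma$, then extremize the functional $\tilde{S}_\mathrm{gen}$ of Eq.~(\ref{eq:S_gen-2}) to leading order. The key conceptual point is that a QES is defined by the vanishing of the first variation of $S_\mathrm{gen}$ in \emph{both} null directions; but for the purpose of locating where the surface reaches deepest into the bulk I only need to identify the direction of maximal normal excursion, which is fixed by the gradient of $S_\mathrm{gen}$ along $\sigma$.

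**Key steps.** First I would set up adapted coordinates near $\sigma$: use the tangential coordinates $y^i$ on $\sigma$ together with the two normal directions spanned by $t$ and $z$ (equivalently $k,l$ via Eq.~(\ref{eq:defkl})). Write the QES as a graph $x^\mu(y^i) = \sigma(y^i) + u(y^i)\, t^\mu + w(y^i)\, z^\mu + O(\delta^2)$ over the region $A$, with $u,w$ vanishing on $\partial A$ by the anchoring condition. Second, I would expand the area term $\mathcal{A}(A\cup X_A)/4G_N$ to quadratic order in $(u,w)$ and their tangential gradients; the linear-in-displacement pieces of the area variation produce exactly the classical expansions $\theta_k,\theta_l$ (equivalently the trace of the extrinsic curvature in the $t,z$ directions), while the gradient-squared terms give the usual Dirichlet-type stiffness that, together with the anchoring, forces a smooth bump whose peak is controlled by the source. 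Third—and this is where the quantum content enters—I would expand $S_\mathrm{bulk}(\Xi_A)$ using the stated assumption that the entropy change is the volume integral of a density, so that its variation under the displacement is captured by the entropy current $J^\mu(x)$; its normal components convert $\theta_{k,l}$ into the quantum expansions $\Theta_{k,l}$. Finally, collecting the linear source terms, the net driving force on the displacement field is proportional to $\Theta_t t - \Theta_z z = \frac{1}{2}(\Theta_k l + \Theta_l k) = s$, so the extremum condition places the deepest excursion along $s$; the smoothness hypotheses $\delta \ll L_g, L_\sigma, L_S$ guarantee that curvature, metric, and entropy-current gradients are subleading and the leading problem is the constant-coefficient one I have written.

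**The main obstacle.** The hard part will be controlling the bulk entropy variation rigorously, since $S_\mathrm{bulk}$ is a genuinely nonlocal functional of the region $\Xi_A$ and need not be well approximated by a local density in general. The derivation leans essentially on the assumption stated just before the theorem—that for an infinitesimal subregion the entropy change is an integral of a smooth current $J^\mu$ varying on scale $L_S$—and the real work is to justify that the \emph{gradient} of this current, not just its value, is suppressed by $\delta/L_S$ so that it does not compete with the geometric stiffness at leading order. A secondary subtlety is bookkeeping the two normal directions simultaneously: because $A$ is an ellipsoid rather than a round ball, the extremization in the $t$ and $z$ channels is anisotropic, and I must check that the \emph{direction} of the deepest point is still $s$ and is not rotated by the ellipsoidal shape; I expect this to follow because the anisotropy enters the stiffness tensor symmetrically and the source direction $s$ is shape-independent, but verifying that the leading normal excursion aligns with $s$ rather than acquiring a tilt is the place where the convexity of $\sigma$ (Theorem~\ref{thm:cvxbdry}, ensuring $\Theta_k,\Theta_{-l}\le 0$ so a minimal QES genuinely exists inward) and the smoothness assumptions must be invoked carefully.
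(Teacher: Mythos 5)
Your proposal is correct and takes essentially the same approach as the paper's proof: normal coordinates centered in $A$, a graph parametrization $\bigl(u(y^i),w(y^i)\bigr)$ of the QES with anchoring boundary conditions, the entropy-current approximation $J_\mu(x)\approx \mathcal{J}_\mu$, and a linearized extremality condition yielding Poisson equations whose sources are proportional to $(\Theta_t,-\Theta_z)$ --- your variational (Euler--Lagrange) packaging is the same computation as the paper's direct imposition of vanishing quantum expansions on the QES. One refinement: since $u/\Theta_t$ and $-w/\Theta_z$ then solve the \emph{same} Poisson problem with the same vanishing boundary data, the whole displacement field --- not merely the deepest excursion --- is proportional to $s$ pointwise, which is what the theorem actually asserts (the entire QES lies on the hypersurface generated by $s$), so you should state this stronger conclusion rather than the weaker deepest-point claim.
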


\begin{proof}
Since the subregion $A$ is assumed to be an ellipsoid, we label its center point as $p$.
We can then set up Riemann normal coordinates in the local neighborhood of $p$:
\begin{equation}
    g_{\mu\nu}(x) = \eta_{\mu\nu} - \frac{1}{3} R_{\mu\nu\rho\sigma} x^{\rho} x^{\sigma} + O(x^3).
\end{equation}
In these coordinates, we are considering a patch of size $O(\delta)$ around the origin $p$ with $R_{\mu\nu\rho\sigma} \sim O(1/L_g^2)$, so at any point in this patch
\begin{equation}
    g_{\mu\nu}(x) = \eta_{\mu\nu} + O\left(\frac{\delta^2}{L_g^2}\right).
\end{equation}

Since there is still a remaining $SO(d, 1)$ symmetry that preserves the Riemann normal coordinate form of the metric, we can use these local Lorentz boosts and rotations to set $t$ and $z$ as the coordinates in the normal direction to $\sigma$ at $p$ while $y^i$ parameterize the tangential directions.
At order $O(\delta)$, the subregion $A$ is then an ellipsoid in the $y^i$ coordinates centered at the origin $p$.

In a small region around $p$, we can define a variational entropy current density that measures how $S_\mathrm{bulk}$ changes. 
More formally, let $X_A$ be a surface anchored to the boundary of $A$, or equivalently of $\overline{A}$:\ $X_A = X_{\overline{A}}$.
Let $\Xi_{\overline{A}}$ be the homology surface with $\partial\Xi_{\overline{A}} = {\overline{A}} \cup X_{\overline{A}}$; then
\begin{equation}
\label{eq:svar}
    S_\mathrm{bulk}(\Xi_{\overline{A}}) = S_0 - \int_{\mathcal{S}} J_{\mu}(x) da^{\mu},
\end{equation}
where $S_0$ is the $S_\mathrm{bulk}$ associated with the full $\sigma$, so it is independent of the choice of subregion $A$ or the surface $X_A$.
$J_{\mu}(x)$ is the aforementioned variational entropy current density which upon integrating over $\mathcal{S}$, a homology surface with boundary $\partial\mathcal{S} = A \cup X_{\overline{A}}$, determines how $S_\mathrm{bulk}(\Xi_{\overline{A}})$ differs from $S_0$. 

We now Taylor expand the entropy current density about $p$, so for any point within $O(\delta)$ distance of $p$
\begin{equation}
\label{eq:jexp}
    J_\mu(x) = \mathcal{J}_{\mu} \left(1 + O\left(\frac{\delta}{L_S}\right) \right),
\end{equation}
where $\mathcal{J}_{\mu}=J_{\mu}(0)$.
Recall that $L_S$ is the length scale of the variations of corresponding entropy variations. 

Let $K^t_{ij}$ , $K^z_{ij}$ denote the extrinsic curvature tensors of $\sigma$ for the $t$ and $z$ normals, respectively.
It follows that $K^t_{ij}, K^z_{ij} \sim O(1/L_\sigma)$.
Since $t$ and $z$ are normal to $\sigma$, the equations for the surface $\sigma$, described by $t_L(y^i)$ and $z_L(y^i)$, can be Taylor expanded in the region $A$ as
\begin{align}
    \label{eq:tleaf}
    t_L(y^i) & = -\frac{1}{2} K^t_{ij} y^i y^j + O\left(\frac{\delta^3}{L_\sigma^2}\right),\\
    \label{eq:zleaf}
    z_L(y^i) &= \frac{1}{2} K^z_{ij} y^i y^j + O\left(\frac{\delta^3}{L_\sigma^2}\right),
\end{align}
where the negative sign in the first line is due to the time-like signature of the $t$ normal.

From Eqs.~(\ref{eq:tleaf}) and (\ref{eq:zleaf}), it follows that at the leading order in $\delta$,
\begin{align}
    \label{eq:tleaf2}
    \nabla^2 t_L & =  - \eta^{ij} K^t_{ij}, \\
    \label{eq:zleaf2}
    \nabla^2 z_L &= \eta^{ij} K^z_{ij} ,
\end{align}
where $\nabla^2 = \partial^i \partial_i$. 
Note that $h_{ij}=\eta_{ij}$ at this order. 

It follows that the ratio of quantum null expansions on the surface $\sigma$ is
\begin{equation}
    \frac{\Theta_k}{\Theta_l} = \frac{\eta^{ij} (K^t_{ij} + K^z_{ij}) + 4G_N(\mathcal{J}_t - \mathcal{J}_z)}{\eta^{ij} (K^t_{ij} - K^z_{ij}) + 4G_N(-\mathcal{J}_t-\mathcal{J}_z)},
\end{equation}
or equivalently
\begin{equation}
    \frac{\Theta_t}{\Theta_z} = \frac{\Theta_k+\Theta_l}{\Theta_k-\Theta_l} = \frac{\eta^{ij} K^t_{ij} - 4G_N \mathcal{J}_z}{\eta^{ij} K^z_{ij} + 4G_N \mathcal{J}_t}.
\end{equation}
Here, we have used that the bulk entropy is given by Eq.~(\ref{eq:svar}) along with the Taylor expansion in Eq.~(\ref{eq:jexp}).

The QES $\Gamma_{\overline{A}}$ can be parameterized in a similar way using $t_E(y^i)$ and $z_E(y^i)$.
The boundary conditions satisfied by the QES are
\begin{align}
    t_E(\partial A) &= t_L(\partial A),\\
    z_E(\partial  A) &= z_L(\partial A).
\end{align}

Since the region $A$ is chosen to be an ellipsoidal region in the $y^i$ coordinates at $O(\delta)$, we have symmetry under $y^i \rightarrow - y^i$ at this order.
Consequently, the $t$ and $z$ directions are normal to the QES at the center point $(t_E(0), z_E(0), 0)$.

Let $\widetilde{K}^t_{ij}$, $\widetilde{K}^z_{ij}$ denote the extrinsic curvature tensors of the QES for the $t$ and $z$ normals, respectively.
We assume that the QES is approximately flat at lengthscale $\delta$, i.e.\ $ \widetilde{K}^t_{ij}, \widetilde{K}^z_{ij} \ll 1/ \delta $.
We will show that this assumption is self-consistent as long as the entropy current density is not too large.

We can Taylor expand $t_E(y^i)$ and $z_E(y^i)$ as
\begin{align}
    \label{eq:tqes}
    t_E(y^i) &= t_E(0) - \frac{1}{2} \widetilde{K}^t_{ij} y^i y^j,\\
    \label{eq:zqes}
    z_E(y^i) &= z_E(0) + \frac{1}{2} \widetilde{K}^z_{ij} y^i y^j.
\end{align}
Since the QES has vanishing quantum null expansion, we have at the leading order
\begin{align}
    \eta^{ij} (\widetilde{K}^t_{ij} + \widetilde{K}^z_{ij} ) + 4G_N(\mathcal{J}_t - \mathcal{J}_z) & = 0,\\
    \eta^{ij} (\widetilde{K}^t_{ij} - \widetilde{K}^z_{ij} ) + 4G_N(-\mathcal{J}_t - \mathcal{J}_z) & = 0.
\end{align}
Here, we have used the expansion in Eq.~(\ref{eq:jexp}) because any point on the QES is at most $O(\delta)$ distant from the origin $p$. 

These equations, along with Eqs.~(\ref{eq:tqes}) and (\ref{eq:zqes}), result in the following differential equations for $t_E(y^i)$ and  $z_E(y^i)$ at the leading order
\begin{align}
    \label{eq:tqes2}
    \nabla^2 t_E & =  - \eta^{ij} \widetilde{K}^t_{ij}  = - 4 G_N \mathcal{J}_z, \\
    \label{eq:zqes2}
    \nabla^2 z_E &= \eta^{ij} \widetilde{K}^z_{ij} = - 4 G_N \mathcal{J}_t.
\end{align}
Earlier, we assumed that $\widetilde{K}^t_{ij}, \widetilde{K}^z_{ij} \ll 1/\delta$, which is justified as long as $\mathcal{J}_t, \mathcal{J}_z \ll 1/(4 G_N \delta)$, which is the case if $\mathcal{J}_t, \mathcal{J}_z$ do not diverge as $\delta \rightarrow 0$.

Let us consider the quantities $\delta t = t_E - t_L$ and $\delta z = z_E - z_L$.
These satisfy the following differential equations at the leading order
\begin{align}
    \label{eq:tdel}
    \nabla^2 \delta t & =   \eta^{ij} K^t_{ij}  - 4 G_N \mathcal{J}_z, \\
    \label{eq:zdel}
    \nabla^2 \delta z &= - \eta^{ij} K^z_{ij}  - 4 G_N \mathcal{J}_t. 
\end{align}
The boundary conditions are given by
\begin{equation}
    \delta t (\partial A) = 
    \delta z (\partial A) = 0.
\end{equation}

It is now clear that at the leading order, $\delta t/\Theta_t$ and $-\delta z/\Theta_z$ satisfy the same differential equation with the same boundary conditions, since $\Theta_t$ and $\Theta_z$ can be regarded as constant at this order.
Thus,
\begin{equation}
    \frac{\delta t}{\Theta_t} = - \frac{\delta z}{\Theta_z} 
    \left( 1 + O(\delta) \right)
\label{eq:appB-final}
\end{equation}
for all points on the extremal surface.
Rewritten, the extremal surface lives on the hypersurface generated by the evolution vector $s = \Theta_t t -\Theta_z z$ normal to $\sigma$. 
\end{proof}

In Theorem~\ref{thm:flweqn} above, we have assumed that the subregion $A$ is a $(d-1)$-dimensional ellipsoid on the surface $\sigma$.
Nonetheless, the proof goes through if the subregion $A$ has a reflection symmetry $(y^1, y^2, ..., y^{d-1}) \rightarrow (-y^1, -y^2, ..., -y^{d-1}) $ about the center point $p$ at order $O(\delta)$.

In fact, we expect this theorem to hold for a more general subregion $A$ because the above proof works if we can find any point $p \in A$ such that the normal vectors to $\sigma$ at $p$ match with the normal vectors to the QES at the point corresponding to $p$ at order $O(\delta)$.
Under the condition that $\delta \ll L_g, L_\sigma, L_S$, such a point lies at the ``center,'' in the sense that the above leading-order treatment works; for example, the QES of the form of Eqs.~(\ref{eq:tqes}) and (\ref{eq:zqes}) is correctly ``anchored'' to $\partial A$ at the leading order in $\delta$.

Finally, our discussion in this appendix applies in the context of the main text to the interior portion of the leaf, $\sigma_{\text{int}}$.
The existence of the exterior portion $\sigma_{\text{ext}}$ does not change the fact that the QES of $\overline{A}$ lies on the hypersurface given by Eq.~(\ref{eq:appB-final}).

\bibliography{mybibliography}

\end{document}